\definecolor{mGreen}{rgb}{0,0.6,0}
\definecolor{mGray}{rgb}{0.5,0.5,0.5}
\definecolor{mPurple}{rgb}{0.58,0,0.82}
\definecolor{backgroundColour}{rgb}{0.95,0.95,0.92}
\lstdefinestyle{CStyle}{
    columns=fullflexible,
    backgroundcolor=\color{backgroundColour},
    commentstyle=\color{mGreen},
    keywordstyle=\color{magenta},
    stringstyle=\color{mPurple},
    basicstyle=\ttfamily\normalsize,
    breakatwhitespace=false,
    breaklines=true,
    captionpos=b,
    keepspaces=true
    showspaces=false,
    showstringspaces=false,
    showtabs=false,
    tabsize=2,
    language=C
}
\tikzstyle{stepnum}=[shade,shading=ball,ball color=green!50!orange!50!,circle,draw=black,minimum size=1.5em,inner sep=1]
\tikzstyle{stagevalue}=[fill,circle,minimum size=5pt,inner sep=1,draw]
\tikzstyle{store} = [pin edge={-to,thick,blue}]
\tikzstyle{restore} = [pin edge={to-,thick,red}]
\tikzstyle{restorefromdisk} = [pin edge={to-,thick,black},pin distance=0.3cm]
\tikzstyle{restorekeepinstack} = [pin edge={to-,thick,red,dashed}]
\tikzstyle{edgefrom} = [<-,thick]
\tikzstyle{edgebend} = [->,thick,>=stealth',bend right=60]
\newtheorem{remark}{Remark}
\newtheorem{problem}{Problem}
\algnewcommand\Assert[1]{\texttt{Assert(#1)}}
\newcommand{\bu}{\bm{u}}
\newcommand{\bU}{\bm{U}}
\newcommand{\N}{\bm{\mathcal{N}}}
\newcommand{\blambda}{\bm{\lambda}}
\newcommand{\f}{\bm{f}}
\newcommand{\fu}{\f_{\bu}}
\newcommand{\REVadvance}{\textsc{advance}}
\newcommand{\REVtakeshot}{\textsc{takeshot}}
\newcommand{\REVstore}{\textsc{store}}
\newcommand{\REVrestore}{\textsc{restore}}
\newcommand{\REVfirsturn}{\textsc{firsturn}}
\newcommand{\REVyouturn}{\textsc{youturn}}
\newcommand{\REVcheck}{\textsc{check}}
\newcommand{\REVcapo}{\textsc{capo}}
\newcommand{\REVfine}{\textsc{fine}}
\newcommand{\REVsnaps}{\textsc{snaps}}
\title[Checkpointing for Adjoint Multistage Schemes]{Optimal Checkpointing for Adjoint Multistage Time-Stepping Schemes}
\thanks{This material is based upon work supported by the U.S. Department of Energy, Office of Science, Office of Advanced Scientific Computing Research, Scientific Discovery through Advanced Computing program through the FASTMath Institute and the Base Math under Contract DE-AC02-06CH11357 at Argonne National Laboratory. This research used resources of the National Energy Research Scientific Computing Center (NERSC), a U.S. Department of Energy Office of Science User Facility located at Lawrence Berkeley National Laboratory, operated under Contract No. DE-AC02-05CH11231. 
}
\author{Hong Zhang}
\email{hongzhang@anl.gov}
\author{Emil M. Constantinescu}
\email{emconsta@anl.gov}
\affiliation{%
  \institution{Argonne National Laboratory}
  \city{Lemont}
  \state{IL}
  \country{USA}
  \postcode{60439}
}
\begin{document}

\begin{abstract}
We consider checkpointing strategies that minimize the number of recomputations
needed when performing discrete adjoint computations using multistage
time-stepping schemes that require computing several substeps within one
complete time step. Specifically, we propose two algorithms that can generate
optimal checkpointing schedules under weak assumptions. The first is an
extension of the seminal \texttt{Revolve} algorithm adapted to multistage
schemes. The second algorithm, named \texttt{CAMS}, is developed based on
dynamic programming, and it requires the least number of recomputations when
compared with other algorithms. The \texttt{CAMS} algorithm is made publicly
available in a library with bindings to C and Python. Numerical results show
that the proposed algorithms can deliver up to two times the speedup compared
with that of classical \texttt{Revolve}. Moreover, we discuss the utilization of
the \texttt{CAMS} library in mature scientific computing libraries and
demonstrate the ease of using it in an adjoint workflow. The proposed algorithms
have been adopted by the \texttt{PETSc} \texttt{TSAdjoint} library. Their
performance has been demonstrated with a large-scale PDE-constrained
optimization problem on a leadership-class supercomputer.
\end{abstract}
\keywords{
Checkpointing, adjoint method, multistage time-stepping schemes,  dynamic
programming, PETSc.
}

\ccsdesc[500]{Mathematics of computing~Automatic differentiation, Mathematical software}
\ccsdesc[500]{Computing methodologies~Modeling and simulation, Machine Learning}

\maketitle

\section{Introduction}

Adjoint computation is a key component for solving partial differential equation
(PDE)-constrained optimization, uncertainty quantification problems, and inverse
problems in a wide range of scientific and engineering fields. It has also been
used for training neural networks, where it is called backpropagation in machine
learning. The adjoint method is an efficient way to compute the derivatives of a
scalar-valued function with respect to a large number of independent variables
including system parameters and initial conditions. In this method, the function
can be interpreted as a sequence of operations, and one can compute the
derivatives by applying the chain rule of differentiation to these operations in
reverse order. Therefore, the computational flow of the function is reversed,
starting from the dependent variables (output) and propagating back to the
independent variables (input).
Furthermore, because of the similarity between stepwise function evaluation and
a time-stepping algorithm for solving time-dependent dynamical systems, the
stepwise function evaluation in any programming language can be interpreted as a
time-dependent procedure, where the primitive operation propagates a state
vector from one time step to another. This interpretation makes it easier to
address the challenge of the checkpointing problem described below.

A general nonlinear dynamical system
\begin{equation}
  \dot{\bu} = \f(t,\bu)\,, \quad \bu(t_0) = \bu_0, \quad \bu \in \mathbb{R}^m, \quad t_0 \leq t \leq t_f
  \label{eq:ode_def}
\end{equation}
can be solved by a time-stepping algorithm 
\begin{equation}
\bu_{n+1} = \N (\bu_n), \quad n = 0,\dots,N-1,
\label{eqn:timestepping_operator}
\end{equation} 
where $\N$ is a time-stepping operator that advances the solution from $t_n$ to
$t_{n+1}$. An explicit time-stepping scheme is discussed in Section
\ref{sec:revisiting} as an example of $\N$. To compute the derivative of a
scalar functional $\psi(\bu_N)$ that depends on the final solution $\bu_N$ with
respect to the system initial state $\bu_0$, one can solve the discrete adjoint
equation
\begin{equation}
  \blambda_{n} =  \left(\frac{d \N}{d \bu}(\bu_n)\right)^T \blambda_{n+1}, \quad \blambda \in \mathbb{R}^m, \quad n= N-1, \dots, 0, \\
\label{eqn:disadjoint}
\end{equation}
for the adjoint variable $\blambda$, which is initialized with $\blambda_{N} =
\left(\frac{d \psi}{d \bu}(\bu_N)\right)^T$. Details of the derivation of this
formula can be found in, for example, \cite{zhang2021tsadjoint,Zhang2014}. The
adjoint variable carries the derivative information and propagates it backward
in time from $t_f$ to $t_0$. At each time step in the backward run, the solution
of the original system \eqref{eqn:timestepping_operator} must be available. It
is natural to save the solution vectors when carrying out the forward run and
reuse them in the reverse run. However, the required storage capacity grows
linearly with the problem size $m$ and the number of time steps $N$
\cite{Griewank2008Evaluating}, making this approach intractable for large-scale
long-time simulations. When there is a limited budget for storage, one resorts
to checkpointing the solution at selected time steps and recomputing the missing
solutions needed \cite{Griewank1992}.  This selection complicates the
algorithm implementation, however, and may hamper the overall performance. Therefore, an
optimal checkpointing problem, which will be formally defined later, needs to be
solved in order to minimize the recomputation cost with a budget constraint on
storage capacity.

Two decades ago Griewank and Walther introduced the first offline optimal
checkpointing strategy that addresses this problem
\cite{Griewank2000,Griewank1992}. They established the theoretical relationship
between the number of recomputations and the number of allowable checkpoints
(solution vectors) based on a binomial algorithm. Their algorithm was
implemented in a software library called \texttt{Revolve} and has been used by
automatic differentiation tools such as ADOL-C, ADtool, and Tapenade and by
algorithmic differentiation tools such as dolfin-adjoint \cite{Farrell2013}.
Many follow-up studies have extended this foundational work to tackle online
checkpointing problems, where the number of computation steps is unknown
\cite{Heuveline2006,Stumm2010,Wang2009}, and multistage or multilevel
checkpointing \cite{Aupy2016,Aupy2017,Schanen2016,Herrmann2020} for
heterogeneous storage systems. Note that multistage checkpointing refers to
checkpointing on multiple storage devices (e.g., memory and disk) and is not
related to multistage time integration. In developing these algorithms, one
commonly assumes that memory is limited and the cost of storing/restoring
checkpoints is negligible, as is still the case for many modern computing
architectures.

Despite theoretical advances in checkpointing strategies, implementing a
checkpointing schedule poses a great challenge because of the workflow
complexity, preventing widespread use. The checkpointing schedule requires the
workflow to switch between partial forward sweeps and partial reverse sweeps, a
requirement that is difficult to achieve. For this reason, many adjoint models
implement suboptimal checkpointing schedules
\cite{CVODES,rackauckas2020universal} or even no checkpointing schedules
\cite{Zhang2014,Andersson2019}. In order to mitigate the implementation
difficulty, the classical \texttt{Revolve} library is designed to be a centralized
controller that guides users through the implementation. However, its use
requires a specific intrusive workflow. In contrast, our proposed approach is implemented as an external guide.


Our contributions in this paper are as follows.
\begin{enumerate}
  \item We demonstrate that the classical \texttt{Revolve} algorithm can
  become suboptimal when multistage time-stepping schemes such as Runge--Kutta
  methods are used to solve the dynamical system \eqref{eq:ode_def}.
  \item We present two new algorithms, namely, modified \texttt{Revolve} and
  \texttt{CAMS}, that are suited to general and stiffly accurate multistage
  time-stepping schemes.
  \item We present a library that implements the \texttt{CAMS} algorithm, and we discuss its utilization in adjoint ordinary differential equation (ODE) solvers.
  \item We demonstrate that our library can be easily incorporated into scientific libraries such as \texttt{PETSc} and be used for efficient large-scale adjoint calculation with limited storage budget.
  \item We show the performance advantages of our algorithms over the classical approach with experimental results on a leadership-class supercomputer.
\end{enumerate}

\section{Optimal checkpointing problems in adjoint calculation} \label{sec:revisiting}

\subsection{Number of recomputations as a performance metric}
In a conventional checkpointing strategy, system states at selective time steps
are stored into checkpoints during a forward sweep and restored during a reverse
sweep. Before an adjoint step \eqref{eqn:adjsensi} can be taken, the state
$\bu_n$ is needed to recompute the time step $t_n \rightarrow t_{n+1}$ to
recover all the intermediate information. And if $\bu_n$ is not checkpointed,
one has to recompute from the nearest checkpoint to recover it first. Often
 in the literature (such as \cite{Griewank2000}) authors implicitly assume that the
execution cost of every step is constant and the cost of memory access/copy is
insignificant compared with the calculation cost. Consequently, the number
of recomputations can be used as a good metric that reflects the computational
cost associated with a checkpointing strategy. In the rest of this paper we make
the same assumption and use the same metric when developing our checkpointing
strategies.

\subsection{Conventional checkpointing problem}
The conventional checkpointing problem, defined in Problem \ref{def:problem1} below, has been solved by Griewank and Walther \cite{Griewank2000} with the  \texttt{Revolve} algorithm. \texttt{Revolve} generates a checkpointing schedule based on the assumption that a checkpoint accommodates only the solution at a time step. For one-stage time-stepping schemes such as the Euler method, the solution is guaranteed to be optimal according to Proposition \ref{prop:revolve} proved in \cite{Griewank2000}. 
\begin{problem}[$Prob_{\texttt{conventional}}(m,s)$] Assume the solution at a selective time step can be saved to a checkpoint. Given the number of time steps $m$ and the maximum allowable number of checkpoints $s$, find a checkpointing schedule that minimizes the number of recomputations in the adjoint computation for \eqref{eqn:timestepping_operator}.
  \label{def:problem1}
\end{problem}
\begin{proposition}[Griewank and Walther \cite{Griewank2000}]
  The optimal solution to $Prob_{\texttt{conventional}}(m,s)$ requires a minimal number of recomputations
  \begin{equation}
  p(m,s)= t m-\binom{s+t}{t-1},
  \label{eqn:p}
  \end{equation}
  where
  $t$ is the unique integer (called repetition number in \cite{Griewank2000})
  that satisfies $\binom{s+t-1}{t-1} < m \leq \binom{s+t}{t}$.
  \label{prop:revolve}
\end{proposition}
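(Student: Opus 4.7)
The plan is to characterize $p(m,s)$ through a dynamic programming recurrence and then verify the closed form by induction. Concretely, I would first observe that in any schedule there is a distinguished splitting index $k \in \{1,\dots,m-1\}$ at which the first auxiliary checkpoint is placed: the upper segment $[k,m]$ is then processed recursively with $s-1$ free slots (the slot at $k$ being reserved), and once that segment is complete the slot at $k$ is released, leaving the lower segment $[0,k]$ with all $s$ slots available. This yields a recurrence of the form
\begin{equation*}
  p(m,s) \;=\; \min_{1 \le k \le m-1}\bigl\{\,c(k) + p(m-k, s-1) + p(k, s)\,\bigr\},
\end{equation*}
where $c(k)$ accounts for any extra forward work charged to the split. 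Monotonicity of $p$ in both arguments together with the boundary values $p(1,s)=0$ and $p(m,0)=\binom{m}{2}$ would be established directly from the construction.

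I would then prove the closed form $p(m,s) = tm - \binom{s+t}{t-1}$ by strong induction on $m+s$. The defining inequality $\binom{s+t-1}{t-1} < m \le \binom{s+t}{t}$ partitions the range of $m$ in accordance with Pascal's rule $\binom{s+t}{t} = \binom{s+t-1}{t-1} + \binom{s+t-1}{t}$, and the natural candidate split lies precisely on this Pascal boundary. Substituting this candidate into the recurrence, applying the inductive hypothesis to the two subproblems (which fall into adjacent regimes of the repetition number), and repeatedly invoking Pascal's identity should collapse the expression to the claimed $tm - \binom{s+t}{t-1}$, thereby establishing the upper bound.

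The main obstacle is the matching lower bound: showing that the Pascal-boundary split is actually a minimizer of the recurrence, so that no alternative choice of $k$ can beat this value. My approach would be to exploit the fact that on each Pascal interval $\bigl(\binom{s+t-1}{t-1},\binom{s+t}{t}\bigr]$ the map $m \mapsto p(m,s)$ is affine with slope $t$, and the slope increases by one at each interval boundary. Consequently $k \mapsto p(m-k,s-1) + p(k,s)$ is piecewise linear, and its slope changes sign exactly at the Pascal boundary, pinning the minimum there. Paired with the explicit \texttt{Revolve} schedule, which realizes this split recursively and hence achieves the claimed value, the two bounds meet and yield \eqref{eqn:p}.
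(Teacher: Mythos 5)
The paper does not prove this proposition; it imports it from Griewank and Walther \cite{Griewank2000}, so the only meaningful comparison is with the original argument there --- which your plan essentially reproduces: the recurrence $p(m,s)=\min_{k}\bigl\{k+p(k,s)+p(m-k,s-1)\bigr\}$ with $c(k)=k$, the piecewise-linear (convex) structure of $m\mapsto p(m,s)$ on the binomial intervals, and the Pascal-identity induction are exactly the ingredients of their proof. Two points need attention before your sketch becomes a proof. First, a convention mismatch: under the paper's counting (see Figure \ref{fig:process_a}, where the initial state occupies one of the $s=3$ slots), the base case is $p(m,1)=\binom{m}{2}$ and $s=0$ is infeasible for $m>1$; your boundary value $p(m,0)=\binom{m}{2}$ corresponds to the convention in which the initial state is stored for free, and carrying that convention through the induction would produce the closed form with $s$ shifted by one rather than \eqref{eqn:p}. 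You need to fix one convention and make the recurrence, the base cases, and the target formula consistent with it.

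Second, and more seriously, the real difficulty of the lower bound is not where you locate it. Showing that the Pascal-boundary split minimizes the right-hand side of the recurrence is a finite convexity computation of the kind you describe. The genuine gap is the step you present as an observation: that \emph{every} admissible schedule admits a distinguished first checkpoint at some $k$ such that the upper segment $[k,m]$ is reversed entirely with $s-1$ slots before the slot at $k$ is released, after which the lower segment gets all $s$ slots. A priori a schedule may overwrite the checkpoint at $k$ before the reversal reaches $k$, interleave stores between the two segments, or reuse slots in ways that break this nesting. Without an exchange or normalization lemma proving that some optimal schedule can be brought into this nested form (the checkpoint-persistence argument in \cite{Griewank2000}), the recurrence yields only the upper bound $p(m,s)\le\min_k\{\cdots\}$, i.e., achievability by the \texttt{Revolve} schedule, and the claimed minimality over all schedules remains unproved.
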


As an example, we illustrate in Figure \ref{fig:process_a} the optimal reversal
schedule generated by \texttt{Revolve}, which requires the minimal number of
recomputations to reverse $10$ time steps given $3$ maximum allowed checkpoints.
For the convenience of notation, we associate the solution of the ODE system
\eqref{eq:ode_def} at each time step with an index. The initial condition
corresponds to index $0$, and the index increases by $1$ for each successful time
step. In the context of adaptive time integration, a successful time step means
the last time step taken after potentially several attempted steps to determine
a suitable step size. Note that the failed attempts are excluded and not
indexed. Starting from the final time step, the adjoint computation decreases
the index by $1$ after each backward step until the index reaches $0$.

During the forward run, the solutions at indices $0$, $4$, and $7$ are copied
into the first three checkpoints. When the final step $9 \rightarrow 10$ is
finished, the solution and stage values at this step are usually accessible in
memory, so the adjoint computation for the last time step can be taken directly.
To compute the next backward step $9 \rightarrow 8$, one can acquire the
solution at index $9$ and the stage values by restoring the second checkpoint
and recomputing two steps forward in time. The second checkpoint can be
discarded after the backward step $8 \rightarrow 7$ so that its storage can be
reused in following steps. Throughout the entire process, $15$ recomputations are taken, as shown Figure \ref{fig:process_a}. One can verify using Proposition \ref{prop:revolve} that the repetition number for this case is $t=2$ and an optimal solution should have $p(10,3)=15$.
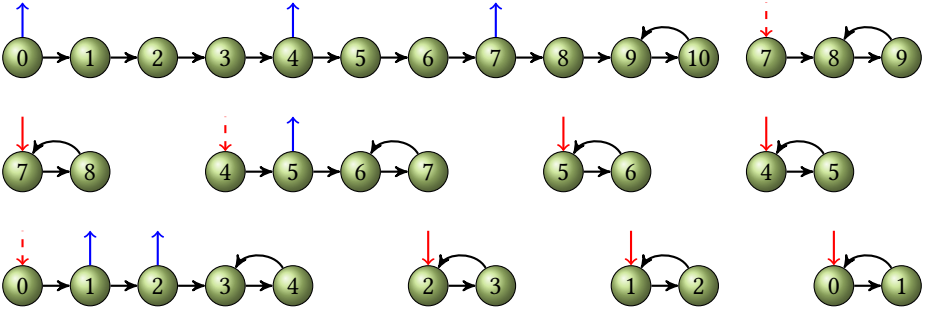
\begin{figure}
  \begin{tikzpicture}[node distance=0.9cm,auto,>=stealth']
    \begin{scope}
      \node [stepnum, pin={[store]}] (sn0) {$0$};
      \node [stepnum] (sn1) [right of=sn0] {$1$}
        edge [edgefrom] (sn0);
      \node [stepnum] (sn2) [right of=sn1] {$2$}
        edge [edgefrom] (sn1);
      \node [stepnum] (sn3) [right of=sn2] {$3$}
        edge [edgefrom] (sn2);
      \node [stepnum, pin={[store]}] (sn4) [right of=sn3] {$4$}
        edge [edgefrom] (sn3);
      \node [stepnum] (sn5) [right of=sn4] {$5$}
        edge [edgefrom] (sn4);
      \node [stepnum] (sn6) [right of=sn5] {$6$}
        edge [edgefrom] (sn5);
      \node [stepnum, pin={[store]}] (sn7) [right of=sn6] {$7$}
        edge [edgefrom] (sn6);
      \node [stepnum] (sn8) [right of=sn7] {$8$}
        edge [edgefrom] (sn7);
      \node [stepnum] (sn9) [right of=sn8] {$9$}
        edge [edgefrom] (sn8);
      \node [stepnum] (sn10) [right of=sn9] {$10$}
        edge [edgefrom] (sn9)
        edge [edgebend] (sn9);
    \end{scope}
    \begin{scope}[shift={(9.9cm,0cm)}]
      \node [stepnum,pin={[restorekeepinstack]}] (sn7) {$7$}; 
      \node [stepnum] (sn8) [right of=sn7] {$8$}
        edge [edgefrom] (sn7);
       \node [stepnum] (sn9) [right of=sn8] {$9$}
        edge [edgefrom] (sn8)  
        edge [edgebend] (sn8);
    \end{scope}
    \begin{scope}[shift={(0cm,-1.5cm)}]
      \node [stepnum,pin={[restore]}] (sn7) {$7$}; 
      \node [stepnum] (sn8) [right of=sn7] {$8$}
        edge [edgefrom] (sn7)
        edge [edgebend] (sn7);
    \end{scope}
    \begin{scope}[shift={(2.7cm,-1.5cm)}]
      \node [stepnum, pin={[restorekeepinstack]}] (sn4) {$4$};
      \node [stepnum,pin={[store]}] (sn5) [right of=sn4] {$5$}
        edge [edgefrom] (sn4);
      \node [stepnum] (sn6) [right of=sn5] {$6$}
        edge [edgefrom] (sn5);
      \node [stepnum] (sn7) [right of=sn6] {$7$}
        edge [edgefrom] (sn6)
        edge [edgebend] (sn6);
    \end{scope}
    \begin{scope}[shift={(7.2cm,-1.5cm)}]
      \node [stepnum,pin={[restore]}] (sn5) {$5$};
      \node [stepnum] (sn6) [right of=sn5] {$6$}
        edge [edgefrom] (sn5)
        edge [edgebend] (sn5);
    \end{scope}
    \begin{scope}[shift={(9.9cm,-1.5cm)}]
      \node [stepnum,pin={[restore]}] (sn4) {$4$};
      \node [stepnum] (sn6) [right of=sn4] {$5$}
        edge [edgefrom] (sn4)
        edge [edgebend] (sn4);
    \end{scope}
    \begin{scope}[shift={(0cm,-3cm)}]
      \node [stepnum, pin={[restorekeepinstack]}] (sn0) {$0$};
      \node [stepnum, pin={[store]}] (sn1) [right of=sn0] {$1$}
        edge [edgefrom] (sn0);
      \node [stepnum, pin={[store]}] (sn2) [right of=sn1] {$2$}
        edge [edgefrom] (sn1);
      \node [stepnum] (sn3) [right of=sn2] {$3$}
        edge [edgefrom] (sn2);
      \node [stepnum] (sn4) [right of=sn3] {$4$}
        edge [edgefrom] (sn3)
        edge [edgebend] (sn3);
    \end{scope}
    \begin{scope}[shift={(5.4cm,-3cm)}]
      \node [stepnum, pin={[restore]}] (sn2) {$2$};
      \node [stepnum] (sn3) [right of=sn2] {$3$}
        edge [edgefrom] (sn2)
        edge [edgebend] (sn2);
    \end{scope}
    \begin{scope}[shift={(8.1cm,-3cm)}]
      \node [stepnum, pin={[restore]}] (sn1) {$1$};
      \node [stepnum] (sn2) [right of=sn1] {$2$}
        edge [edgefrom] (sn1)
        edge [edgebend] (sn1);
    \end{scope}
    \begin{scope}[shift={(10.8cm,-3cm)}]
      \node [stepnum, pin={[restore]}] (sn0) {$0$};
      \node [stepnum] (sn1) [right of=sn0] {$1$}
        edge [edgefrom] (sn0)
        edge [edgebend] (sn0);
    \end{scope}
  \end{tikzpicture}
  \caption{Application of \texttt{Revolve} to reversing $10$ time steps given $3$ allowable
  checkpoints. The up arrow and down arrow stand for the ``store'' and
  ``restore'' operations, respectively. The down arrows with solid lines
  indicate that the checkpointing units can be discarded after being used. The
  down arrows with dashed lines mean to restore the unit without removing it from
  memory. }
  \label{fig:process_a}
\end{figure}

\subsection{Minimizing the number of recomputations for multistage methods}
The conventional strategy, however, can be suboptimal for multistage time-stepping schemes if we relax the assumption to allow saving the intermediate stages together with the solution as a checkpoint. For example, consider the modified problem.
\begin{problem}[$Prob_{\texttt{modified}}(m,s)$] Assume a checkpoint is composed
  of the solution and the stage values at a time step. Given the number of time
  steps $m$ and the maximum allowable number of checkpoints $s$, find a
  checkpointing schedule that minimizes the number of recomputations in the
  adjoint computation for \eqref{eqn:timestepping_operator}.
  \label{def:problem2}
\end{problem}
Multistage schemes such as Runge--Kutta (RK) methods are popular for solving
systems of ODEs; their adjoint counterparts are implemented in ODE solver
libraries such as \texttt{FATODE} \cite{Zhang2014} and more recently by
\texttt{PETSc}
\texttt{TSAdjoint}\cite{PETSc-user-ref,zhang2021tsadjoint,abhyankar2018petscts}.
An $\ell$-stage explicit RK method is expressed as 
\begin{equation}
  \label{eqn:adjsensi}
  \begin{aligned}
  \bU_i & = \bu_n + h_n \sum_{j=1}^{i-1}  \, a_{ij} \, \f(\bU_j), \quad i=1,\cdots,\ell, \\
  \bu_{n+1} & = \bu_n + h_n \sum_{i=1}^\ell  \, b_i \, \f(\bU_i).
  \end{aligned}
\end{equation}
Its discrete adjoint is
\begin{equation}
\label{eqn:disadj_erk}
\begin{aligned}
  \blambda_{\ell,i} & =  h_n \fu^T(\textcolor{blue}{\bU_i})  \left( b_i \blambda_{n+1} + \sum_{j=i+1}^\ell a_{ji} \, \blambda_{\ell,j} \right), \quad i=\ell,\cdots,1\\
  \blambda_n & = \blambda_{n+1} + \sum_{j=1}^\ell \blambda_{\ell,j},
\end{aligned}
\end{equation}
where $\blambda$ is the adjoint variable that carries the sensitivity
information and is propagated in a backward step during a reverse sweep.

The adjoint step of a Runge--Kutta scheme requires all the stage values (see the
sensitivity equation \eqref{eqn:disadj_erk}, for example). These intermediate values are
usually obtained by restoring a state saved during the forward sweep and
recomputing a time step using this state, as illustrated in Figure
\ref{fig:rk_checkpoint}. \texttt{Revolve} requires its users to implement a
basic action (named $youturn$) that recomputes a forward step followed
immediately by a backward step. This strategy has been a de facto standard in
classical adjoint computation.
\begin{figure}[ht]
  \centering
  \includegraphics[width=0.9\textwidth]{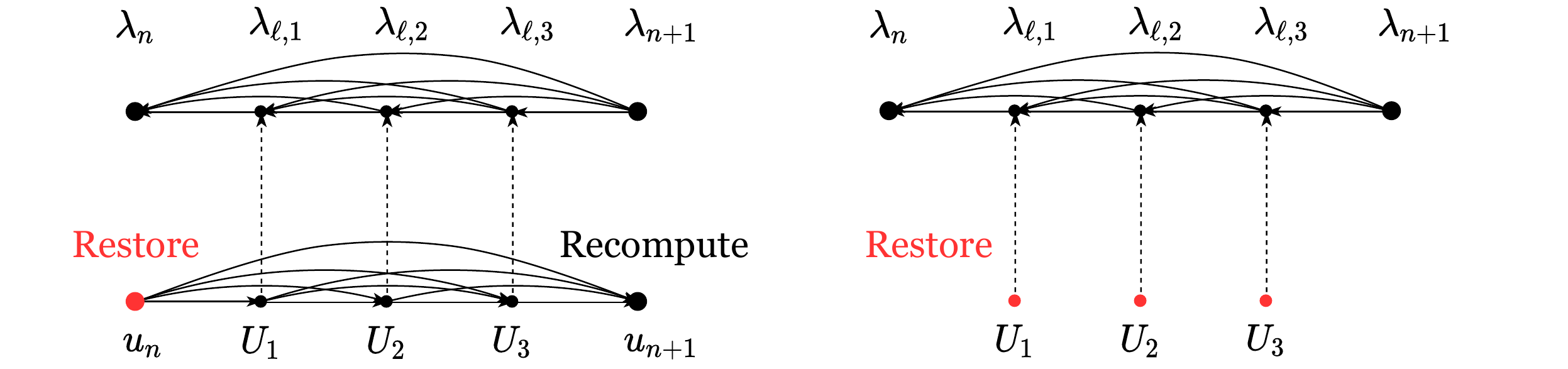}
  \caption{Strategies for obtaining the stage values during the reverse sweep of a discrete adjoint solver based on a Runge--Kutta method. Left is the classical strategy used with \texttt{Revolve}. Right is the strategy considered in this work.}
  \label{fig:rk_checkpoint}
\end{figure}

Let us count the number of recomputations required for an ideal case where the
memory is unlimited. To reverse $m$ time steps, $m-1$ recomputations would be
required if one checkpointed the solution at every time step. No
recomputation would be needed,  however, if one checkpointed the stage values instead of
the solution for all the time steps. 

For cases where memory is limited, checkpointing the stage values may still lead
to fewer recomputations. Based on this observation, we extend the classical
optimal checkpointing scheme in \cite{Griewank2000} to solve Problem \ref{def:problem2}. See Figure \ref{fig:rk_checkpoint}
for a schematic illustration. Although saving more information at each time step
means that fewer checkpoints are available, we will show in Section \ref{sec:mrevolve} that in certain circumstances
the extended scheme may still outperform the original scheme, with the gain
depending on the total number of time steps to be reversed.

Our goal is to minimize the number of recomputations under practical assumptions. Further relaxing the assumption, we will seek in Section \ref{sec:acms} an optimal solution for the following problem that requires the fewest recomputations among the solutions to the three problems ($\textsc{Problem} 1,2,3$).
\begin{problem}[$Prob_{\texttt{multistage}}(m,s)$] Assume that either the solution  or an intermediate stage (which has the same size as the solution) can be saved to a checkpoint. Given the number of time steps $m$ and the maximum allowable number of checkpoints $s$, find a checkpointing schedule that minimizes the number of recomputations in the adjoint computation for \eqref{eqn:timestepping_operator}.
  \label{def:problem3}
\end{problem}

\section{Modified checkpointing scheme based on \texttt{Revolve}}\label{sec:mrevolve} 

In this section we first describe the \texttt{Revolve} nomenclature in order to
provide the necessary background. We then  introduce our solution to
$Prob_{\texttt{modified}}(m,s)$ using a concrete example, discuss its optimality,
and address the implementation aspects.

\subsection{Modification to the \texttt{Revolve} offline algorithm}

To generate the schedule in Figure \ref{fig:process_a}, one needs to call the
API routine \textsc{revolve}() repeatedly and implement the actions prescribed
by the algorithm. The return value of \textsc{revolve}() tells the calling
program to perform one of the actions among \REVadvance{},
\REVtakeshot{}(\REVstore{}), \REVrestore{}, \REVfirsturn{}, and \REVyouturn{},
which are briefly summarized in Table \ref{tab:revolve} and are explained in
detail in \cite{Griewank2000}.
\begin{table}[ht]
  \caption{\texttt{Revolve} nomenclature.}
  \label{tab:revolve}
  \centering
  \begin{tabular}{l l}
  \toprule
      \REVadvance{} & advance the solution forward \\
      \REVtakeshot{}(\REVstore{})  &  copy the solution into a checkpoint \\
      \REVrestore{} & copy a checkpoint back into the solution \\
      \REVfirsturn{} & take one backward step directly (usually after one forward
      step) \\
      \REVyouturn{} & take one forward step and then one backward step \\
   \bottomrule
  \end{tabular}
\end{table}

In our modification of the schedule, every checkpoint position is shifted by
one, and the stage values are included so that the Jacobian can be computed
directly from these. To be specific, if the original \texttt{Revolve} algorithm
determines that one should checkpoint the solution at index $i$, we will store a
combined checkpoint at the end of the time step $i \rightarrow i+1$ including
the solution at index $i+1$ and the stage values. We do so by mapping the
actions prescribed by \texttt{Revolve} to a series of new but similar actions,
while guaranteeing the optimality for the new checkpointing settings. Table
\ref{tab:mrevolve} lists the mappings we conduct in the modified schedule.
Figure \ref{fig:process_b} illustrates the checkpointing schedule generated by
the modified \texttt{Revolve} algorithm.
\begin{table}[ht]
  \centering
  \caption{Mapping the \texttt{Revolve} output to new actions.}
  \label{tab:mrevolve}
  \begin{tabular}{l l}
  \toprule
  \texttt{Revolve}   &  Our modification \\
  \midrule
      \REVadvance{} from $i$ to $j$ & advance the solution from $i+1$ to $j+1$\\
      \REVstore{} solution $i$ &  copy the solution at $i+1$ and the stages into
      a checkpoint \\
      \REVrestore{} to solution $i$ & restore the solution at $i+1$ and the stages from a checkpoint \\
      \REVyouturn{} & take one backward step directly (\REVfirsturn{}) \\
   \bottomrule
  \end{tabular}
\end{table}
\begin{figure}
  \begin{tikzpicture}[node distance=0.9cm,auto,>=stealth']
    \begin{scope}
      \node [stepnum] (sn0) {$0$};
      \node [stepnum, pin={[store]}, label={[label distance=0.2cm,stagevalue,pin={[store]}]above left:}] (sn1) [right of=sn0] {$1$}
        edge [edgefrom] (sn0);
      \node [stepnum] (sn2) [right of=sn1] {$2$}
        edge [edgefrom] (sn1);
      \node [stepnum] (sn3) [right of=sn2] {$3$}
        edge [edgefrom] (sn2);
      \node [stepnum] (sn4) [right of=sn3] {$4$}
        edge [edgefrom] (sn3);
      \node [stepnum, pin={[store]}, label={[label distance=0.2cm,stagevalue,pin={[store]}]above left:}] (sn5) [right of=sn4] {$5$}
        edge [edgefrom] (sn4);
      \node [stepnum] (sn6) [right of=sn5] {$6$}
        edge [edgefrom] (sn5);
      \node [stepnum] (sn7) [right of=sn6] {$7$}
        edge [edgefrom] (sn6);
      \node [stepnum, pin={[store]}, label={[label distance=0.2cm,stagevalue,pin={[store]}]above left:}] (sn8) [right of=sn7] {$8$}
        edge [edgefrom] (sn7);
      \node [stepnum] (sn9) [right of=sn8] {$9$}
        edge [edgefrom] (sn8);
      \node [stepnum] (sn10) [right of=sn9] {$10$}
        edge [edgefrom] (sn9)
        edge [edgebend] (sn9);
    \end{scope}
    \begin{scope}[shift={(10.8cm,0cm)}]
      \node [stepnum,pin={[restorekeepinstack]}] (sn8) {$8$};
      \node [stepnum] (sn9) [right of=sn8] {$9$}
        edge [edgefrom] (sn8)  
        edge [edgebend] (sn8);
    \end{scope}
    \begin{scope}[shift={(0cm,-1.5cm)}]
      \node [stepnum] (sn7) {$7$}; 
      \node [stepnum,pin={[restore]}, label={[label distance=0.2cm,stagevalue,pin={[restore]}]above left:}] (sn8) [right of=sn7] {$8$}
        edge [edgebend] (sn7);
    \end{scope}
    \begin{scope}[shift={(3.6cm,-1.5cm)}]
      \node [stepnum, pin={[restorekeepinstack]}] (sn5) {$5$};
      \node [stepnum,pin={[store]},label={[label distance=0.2cm,stagevalue,pin={[store]}]above left:}] (sn6) [right of=sn5] {$6$}
        edge [edgefrom] (sn5);
      \node [stepnum] (sn7) [right of=sn6] {$7$}
        edge [edgefrom] (sn6)
        edge [edgebend] (sn6);
    \end{scope}
    \begin{scope}[shift={(7.2cm,-1.5cm)}]
      \node [stepnum] (sn5) {$5$};
      \node [stepnum,pin={[restore]},label={[label distance=0.2cm,stagevalue,pin={[restore]}]above left:}] (sn6) [right of=sn5] {$6$}
        edge [edgebend] (sn5);
    \end{scope}
    \begin{scope}[shift={(9.9cm,-1.5cm)}]
      \node [stepnum] (sn4) {$4$};
      \node [stepnum,pin={[restore]},label={[label distance=0.2cm,stagevalue,pin={[restore]}]above left:}] (sn6) [right of=sn4] {$5$}
        edge [edgebend] (sn4);
    \end{scope}
    \begin{scope}[shift={(0.9cm,-3cm)}]
      \node [stepnum, pin={[restorekeepinstack]}] (sn1) {$1$};
      \node [stepnum,pin={[store]},label={[label distance=0.2cm,stagevalue,pin={[store]}]above left:}] (sn2) [right of=sn1] {$2$}
        edge [edgefrom] (sn1);
      \node [stepnum,pin={[store]},label={[label distance=0.2cm,stagevalue,pin={[store]}]above left:}] (sn3) [right of=sn2] {$3$}
        edge [edgefrom] (sn2);
      \node [stepnum] (sn4) [right of=sn3] {$4$}
        edge [edgefrom] (sn3)
        edge [edgebend] (sn3);
    \end{scope}
    \begin{scope}[shift={(5.4cm,-3cm)}]
      \node [stepnum] (sn2) {$2$};
      \node [stepnum,pin={[restore]},label={[label distance=0.2cm,stagevalue,pin={[restore]}]above left:}] (sn3) [right of=sn2] {$3$}
        edge [edgebend] (sn2);
    \end{scope}
    \begin{scope}[shift={(8.1cm,-3cm)}]
      \node [stepnum] (sn1) {$1$};
      \node [stepnum,pin={[restore]},label={[label distance=0.2cm,stagevalue,pin={[restore]}]above left:}] (sn2) [right of=sn1] {$2$}
        edge [edgebend] (sn1);
    \end{scope}
    \begin{scope}[shift={(10.8cm,-3cm)}]
      \node [stepnum] (sn0) {$0$};
      \node [stepnum,pin={[restore]},label={[label distance=0.2cm,stagevalue,pin={[restore]}]above left:}] (sn1) [right of=sn0] {$1$}
        edge [edgebend] (sn0);
    \end{scope}
  \end{tikzpicture}
  \caption{Application of the modified
  \texttt{Revolve} to reversing $10$ time steps given $3$ allowable
  checkpoints. The up arrow and down arrow stand for the ``store'' and
  ``restore'' operations, respectively. The down arrows with solid lines
  indicate that the checkpointing units can be discarded after being used. The
  down arrows with dashed lines mean to restore the unit without removing it from
  memory. }
  \label{fig:process_b}
\end{figure}
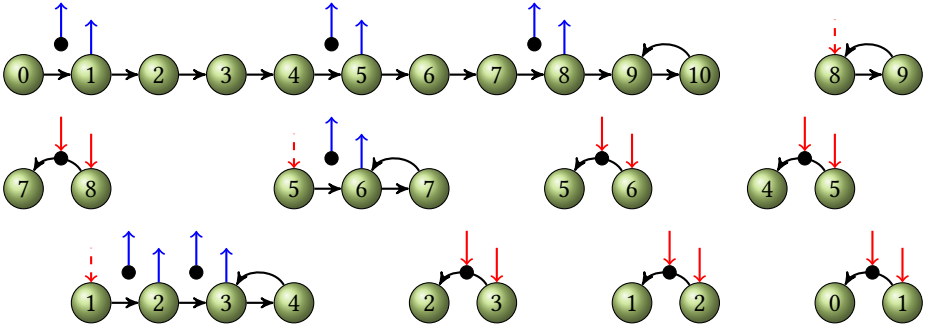

\subsection{Optimality of the modified algorithm}

As we observe from the optimal schedule for adjoint computation provided by
\texttt{Revolve}, the intermediate information required by each backward step
must be recomputed from the solution restored from a checkpoint, except for the
last time step. Because every checkpoint is shifted to one time step later, the
distance between the restored checkpoint and the current solution in the reverse
sweep is reduced by one,  thus saving exactly one recomputation for each adjoint
step. This observation leads to the following proposition regarding optimality
for the modified \texttt{Revolve} algorithm.
\begin{proposition}
Assume that a checkpoint is composed of the solution and the stage values at a time
step. Given $s$ number of allowed checkpoints in memory, the minimal number of
additional forward steps (recomputations) needed for the adjoint computation of
$m$ time steps is 
\begin{equation}
\tilde{p}(m,s)= (t-1)\, m-\binom{s+t}{t-1}+1,
\label{eqn:ptilde}
\end{equation}
where  
$t$ is the unique integer that satisfies $\binom{s+t-1}{t-1} < m \leq \binom{s+t}{t}$.
\label{prop:recomp}
\end{proposition}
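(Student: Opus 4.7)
The plan is to establish the formula by reducing $Prob_{\texttt{modified}}(m,s)$ to the already-solved $Prob_{\texttt{conventional}}(m,s)$ via a cost-preserving one-step shift of checkpoints. The driving observation is that a combined checkpoint at step $i$ (the solution $\bu_i$ together with the stages of the step $i-1 \to i$) carries exactly the same usable information as a solution-only checkpoint at step $i-1$ would, modulo one recomputed forward step: from $\bu_{i-1}$ one evaluates the step $i-1 \to i$ to reproduce both $\bu_i$ and all the stages. This sets up a natural correspondence between schedules for the two problems in which each restore-and-recompute segment differs in length by exactly one.

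For the upper bound, I would take the optimal \texttt{Revolve} schedule guaranteed by Proposition \ref{prop:revolve} and shift every stored checkpoint one step to the right, reinterpreting the actions as in Table \ref{tab:mrevolve}. The only backward step that requires no recomputation is the initial \REVfirsturn{} on the final time step, so there are $m-1$ remaining backward steps, each of whose forward recomputation segment is now shorter by exactly one step. Summing the savings gives
\[
p(m,s) - (m-1) = tm - \binom{s+t}{t-1} - (m-1) = (t-1)m - \binom{s+t}{t-1} + 1,
\]
which matches $\tilde{p}(m,s)$. Feasibility must also be checked: the live set of checkpoints in memory at every instant has the same cardinality as in \texttt{Revolve}, so the budget $s$ is respected, and the shift sends stored positions $\{0,\dots,m-1\}$ into $\{1,\dots,m\}$.

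For the matching lower bound, suppose some schedule $S$ solves $Prob_{\texttt{modified}}(m,s)$ with $c(S)$ recomputations. Replace every combined checkpoint at step $i$ in $S$ by a solution-only checkpoint taken one step earlier at step $i-1$, producing a schedule $S'$ for $Prob_{\texttt{conventional}}(m,s)$ with the same checkpoint count. Each of the $m-1$ backward steps after the \REVfirsturn{} now requires one extra forward step in $S'$, namely the step $i-1 \to i$ that was previously subsumed by the stored stages. Hence $c(S') = c(S) + (m-1)$, and Proposition \ref{prop:revolve} forces $c(S) \geq p(m,s) - (m-1) = \tilde{p}(m,s)$, matching the upper bound.

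The main obstacle I anticipate is making the shift rigorous at the boundaries, particularly around the \REVfirsturn{} and the final time index. One must argue that a combined checkpoint landing at index $m$ is in fact already supplied by working memory at the end of the forward sweep and does not consume an additional slot, and that the conversion $S \leftrightarrow S'$ correctly handles checkpoints that \texttt{Revolve} treats specially (for instance the initial state $\bu_0$). Once this bookkeeping is pinned down, the per-step accounting that yields the savings of $m-1$ is routine.
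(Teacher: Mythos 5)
Your proposal is correct and follows essentially the same route as the paper: the upper bound comes from shifting every \texttt{Revolve} checkpoint one step later to save one recomputation on each of the $m-1$ non-final adjoint steps, and the lower bound comes from the reverse shift, which converts any schedule for $Prob_{\texttt{modified}}(m,s)$ into a conventional schedule costing exactly $m-1$ more recomputations, so that Proposition~\ref{prop:revolve} forces $c(S)\geq p(m,s)-(m-1)=\tilde{p}(m,s)$. The paper phrases the lower bound as a contradiction rather than a direct inequality, and is even terser than you are about the boundary bookkeeping you flag, but the substance is identical.
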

\begin{proof}
According to the
observation mentioned above, one can further save $m-1$ additional forward steps
with the modified scheme. We will prove by contradiction that no further savings
are possible with other schedules.

If  a schedule exists that satisfies the assumption and takes fewer
recomputations than \eqref{eqn:ptilde}, one can move all the checkpoints
backward by one step and exclude the stage values so that the checkpoints are
composed of solutions only. The resulting schedule will cost $m-1$ additional
recomputations by construction. It is clearly a solution to the classical
checkpointing problem for $m$ time steps given $s$ allowed checkpoints, and the
total number of recomputations required is less than $ t\, m-\binom{s+t}{t-1}$.
This contradicts the optimality result in Proposition \ref{prop:revolve}.
\end{proof}

\begin{remark}
The same modification can also be applied to the online checkpointing algorithms
in \cite{Stumm2010,Heuveline2006,Wang2009} and to the multistage checkpointing
algorithms in \cite{Stumm2009}. Considering the same number of allowable
checkpoints, the number of recomputations saved is always equal to the total
number of steps minus one. An immediate question is whether to use the modified
or the original algorithm (in other words, whether to save stage values) given
the same amount of storage capacity. 

According to Proposition \ref{prop:recomp}, the optimal option depends on
multiple factors, including the number of steps, the number of stages of the
time-stepping algorithm, and the memory capacity. Therefore, for a given
time-stepping algorithm and a fixed amount of storage capacity, the best choice
can be made based on only the total number of time steps. For example, we
suppose there is storage space for $12$ solutions when using the original
algorithm. In the modified algorithm, the same space can be used to store $6$
checkpoints if each checkpoint consists of one solution and one stage, and it
can be used to store $4$ checkpoints if each checkpoint consists of one solution
and two stages. In Figure \ref{fig:mrerolve} we plot the number of
recomputations for the two algorithms in two scenarios---saving one additional
stage and saving two additional stages. As can be seen, saving the stages
together with the solution is more favorable than saving only the solution until
a crossover point is reached ($41$ and $13$ steps, respectively, for the two
illustrated scenarios). Furthermore, the number of stages, determined by the
time-stepping scheme, has a critical impact on the location of the crossover
point; for schemes with fewer stages, saving the stages benefits a wider range
of time steps (compare $41$ with $13$).
\begin{figure}
  \centering
  \includegraphics[width=0.95\textwidth]{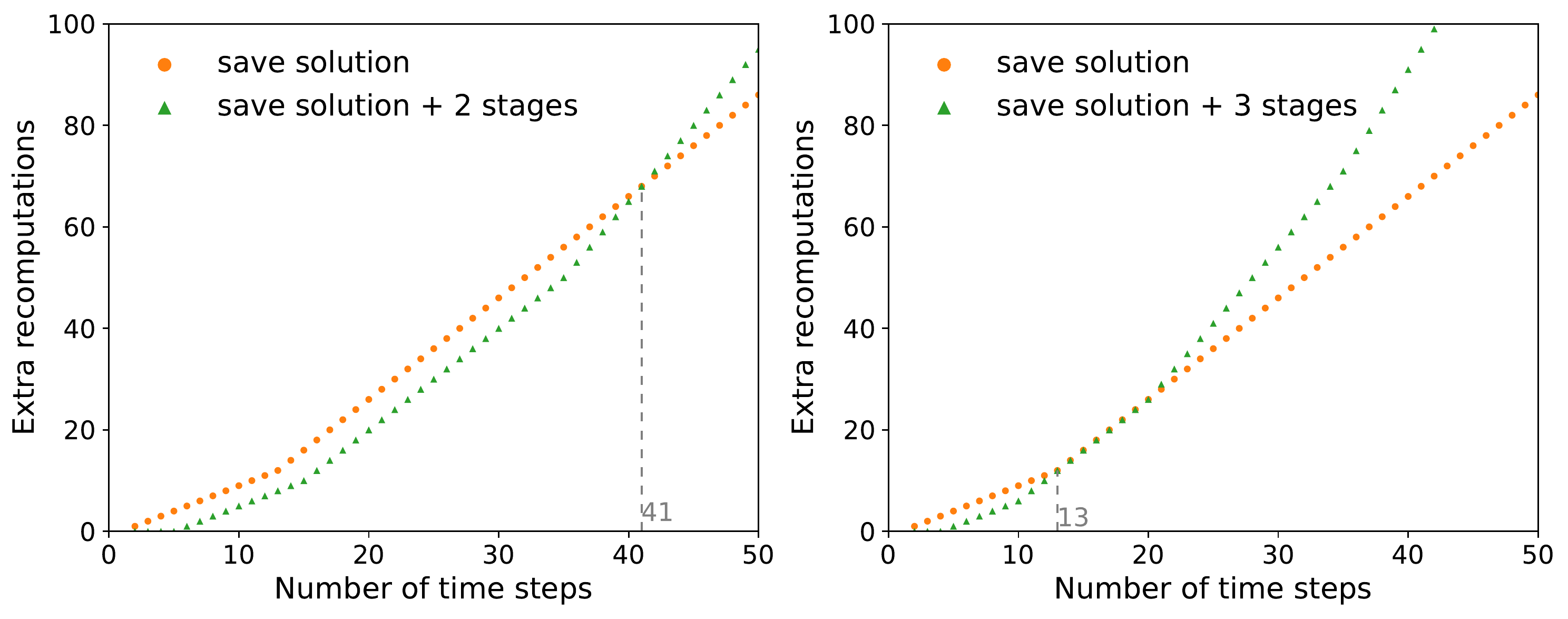}
  \caption{Minimal recomputations required by \texttt{Revolve} and the modified
  \texttt{Revolve} with different numbers of stages ($2$ in the left plot and
  $3$ in the right plot). 
 Adapted from \cite{zhang2021tsadjoint}.}
  \label{fig:mrerolve}
\end{figure}
\end{remark}
\begin{remark}
Some multistage time-stepping methods have a property that can be
exploited to reduce the size of a checkpoint. In particular, if the last stage is
equal to the solution at the end of a time step, one can skip the last stage
when storing a checkpoint. Many classical implicit RK methods and the
Crank--Nicolson method have this property and thus can be treated as having one
less stage. 
\end{remark}

\section{Truly optimal checkpointing for the adjoint of multistage schemes}\label{sec:acms}

Both the \texttt{Revolve} algorithm and the modified \texttt{Revolve} we propose
are proved to be optimal under some assumptions on the checkpointing strategy
(see Figure \ref{fig:rk_checkpoint}); however, neither of them is ideal in
practice when evaluated without these assumptions, as indicated in Figure
\ref{fig:mrerolve}. To obtain a truly optimal checkpointing schedule, we
consider the solution to $Prob_{\texttt{multistage}}(m,s)$ with a further
relaxed assumption allowing the solution or the stage values or both at one time
step to be checkpointed, a situation that is not difficult to achieve in most
ODE solvers. For the fairness of comparison, we introduce the concept of a
checkpointing unit. By definition, one checkpointing unit can store one solution
vector or one stage vector since they have the same size; one checkpoint may
contain one or more units, thus having different types. The total memory (in
bytes) occupied by checkpoints is $number\ of\ checkpointing\ units \times
checkpoining\ unit\ size$ (in bytes). Therefore, when comparing different
algorithms, using the same number of checkpointing units indicates using the
same amount of total memory.

In addition, many multistage schemes are constructed to be stiffly accurate in
order to solve stiff ODEs. This property typically requires that the solution at the
end of each time step be equal to the last stage of the method. Taking into
account this observation as well as the relaxed assumption, we develop the
optimal checkpointing algorithm \texttt{CAMS} that includes two variants, one
for stiffly accurate schemes and the other for general cases, denoted by
\texttt{CAMS-SA} and \texttt{CAMS-GEN}, respectively. Both variants are
developed by using a divide-and-conquer strategy.

\subsection{\texttt{CAMS} for stiffly accurate multistage schemes} \label{sec:stifflyaccurate}

First, let us consider a subproblem of the checkpointing problem
$Prob_{\texttt{multistage}}(m,s)$. If the initial state is already checkpointed,
we want to know how many additional forward steps are necessary to reverse a
sequence of time steps with a given amount of checkpointing units. To solve this subproblem, we establish a recurrence equation as follows.
\begin{lemma}
Given $s$ allowed checkpointing units in memory and the \textbf{initial state}
stored in memory, the minimal number of additional forward steps needed for the
adjoint computation of $m$ time steps using an $\ell$-stage time integrator
satisfies
\begin{equation}
    P_\textnormal{IS}(m,s) = \min
    \begin{cases}
    \min_{1 \leq \tilde{m} \leq m} \left(\tilde{m} + P_\textnormal{IS}(\tilde{m},s) + P_\textnormal{IS}(m-\tilde{m},s-1)\right)\\
    \min_{1 \leq \tilde{m} \leq m} \left(\tilde{m}-1 + P_\textnormal{IS}(\tilde{m}-1,s) + P_\textnormal{IS}(m-\tilde{m},s-\ell)\right)
    \end{cases} .
    \label{eq:p1}
\end{equation}
\label{th:p1}
\end{lemma}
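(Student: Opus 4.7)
The plan is to establish the recurrence (\ref{eq:p1}) by a divide-and-conquer / dynamic-programming argument: I will show that an optimal schedule for the $(m,s)$ instance must first commit a checkpoint at some step $\tilde m\in[1,m]$ of exactly one of the two admissible types (the solution alone, or the full set of $\ell$ stage vectors), and that after this commitment the remaining work decomposes into two independent subproblems whose optima are precisely the $P_{\text{IS}}$ values appearing on the right-hand side. Pairing this with a matching constructive schedule for each $(\tilde m,\text{type})$ pair will yield equality.

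More concretely, I would proceed by strong induction on $(m,s)$, with the obvious small-case ranges serving as base cases. For the upper bound I construct, for each $(\tilde m,\text{type})$, an explicit schedule that achieves the corresponding summand: advance $\tilde m$ forward steps from the initial state at index $0$, commit the prescribed checkpoint at index $\tilde m$ (consuming $1$ or $\ell$ units), invoke the inductively optimal subschedule on the tail segment with the new checkpoint as its initial state and $s-1$ (resp.\ $s-\ell$) remaining units, release the tail checkpoint once its associated adjoints are finished, and then invoke the inductively optimal subschedule on the head segment starting from the original initial state with all $s$ units again available. The stiff-accuracy hypothesis is used in the stages branch to identify the solution at $\tilde m$ with the last saved stage, so the stages checkpoint also delivers the solution at $\tilde m$ at no extra cost; furthermore the adjoint of step $\tilde m$ itself is carried out without any extra forward step, which is exactly what converts the leading $\tilde m$ one would naively expect into $\tilde m-1$, with the head subproblem correspondingly shrinking from $[0,\tilde m]$ to $[0,\tilde m-1]$.

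For the lower bound I take any schedule $\Sigma$ achieving $P_{\text{IS}}(m,s)$ and identify the very first store it performs after the always-present initial state; let $\tilde m^\ast$ be the index of this store and $\tau^\ast$ its type. Because no other checkpoint exists at the moment of this first store, all forward progress from $0$ up to $\tilde m^\ast$ is forced. Once the store is done, the remainder of $\Sigma$ must reverse the tail $[\tilde m^\ast,m]$ with the new checkpoint as effective initial state and at most $s-1$ (resp.\ $s-\ell$) further units, and then reverse the head with the original initial state and at most $s$ units (since the tail checkpoint can be released beforehand). Each of these two phases costs at least the corresponding $P_{\text{IS}}$ value by the inductive hypothesis, and summing with the leading forward-step contribution, together with the one free adjoint at step $\tilde m^\ast$ in the stages case, shows that $\Sigma$'s cost is bounded below by the right-hand side of (\ref{eq:p1}).

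The principal obstacle I anticipate lies in the lower-bound direction: one must justify, without loss of generality, the decomposition into two contiguous phases via an exchange argument showing that an optimal schedule can always be rearranged to complete the tail reversal before starting the head reversal without increasing the forward-step count. This rests on the permanent availability of the initial state at index $0$, the fact that the tail checkpoint can be released as soon as its associated adjoints are complete, and the observation that work in the two phases consumes disjoint checkpoint budgets. A secondary technical point is the careful accounting of the $\tilde m$-th forward step in the stages branch, which simultaneously advances the state and produces the stored stages used for the free adjoint of step $\tilde m$; this shared role is exactly what produces the asymmetric leading terms $\tilde m$ versus $\tilde m-1$ in the two cases, and it must be tracked cleanly in both directions of the argument.
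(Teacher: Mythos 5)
Your proposal follows essentially the same divide-and-conquer argument as the paper: condition on the position $\tilde m$ and the type (solution only versus solution-plus-stages) of the next checkpoint, reverse the tail segment with the reduced unit budget ($s-1$ or $s-\ell$), then re-sweep and reverse the head, with stiff accuracy letting the stage checkpoint double as the solution at $\tilde m$ and the stored stages saving one recomputation, which produces the $\tilde m-1$ terms in the second branch. The paper's proof is in fact less detailed than your plan---it states only the decomposition into the two cases and does not spell out the lower-bound/exchange argument that you correctly identify as the delicate point.
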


\begin{proof}
Assume that the next checkpoint is taken after $\tilde{m}$ steps with $1 \leq
\tilde{m} \leq m$. We need to consider two cases based on the type of the next
checkpoint.

\textit{Case 1:} The state $\tilde{m}$ is checkpointed.

\textit{Case 2:} The state $\tilde{m}$ and stage values at time step $\tilde{m}$
are checkpointed.

For case 1, the sequence of $m$ time steps can be split into two parts, one with
$\tilde{m}$ time steps and the other with $m-\tilde{m}$. The second part will be
reversed first, requiring $P_\textnormal{IS}(m-\tilde{m},s-1)$ additional
forward steps. The first part needs a forward sweep over the $m$ time steps
before the reverse run can be performed. Summing up all the additional forward
steps leads to the first recurrence equation in \ref{eq:p1}.

For case 2, the $m$th time step can be reversed directly since the stage values
can be restored from memory; therefore, the first subsequence consists of $m-1$
time steps. The second subsequence still consists of $m-\tilde{m}$ time steps,
but there are $\ell$ fewer checkpointing units available. This case corresponds
to the second recurrence equation in \ref{eq:p1}.
\end{proof}

With the solution of the reduced problem, we can solve the original problem
easily. Notice that to reverse the first time step, one can either restore the
initial state from checkpoints and recompute the time step or restore the stage
values directly from memory. The latter option transforms the problem into
checkpointing for $m-1$ time steps given $s-\ell$ checkpointing units and the
first state already checkpointed, which is $P_\textnormal{IS}(m-1,s-\ell+1)$.
\begin{theorem}
Given $s$ allowed checkpointing units in memory, the minimal number of
additional forward steps needed for the adjoint computation of $m$ time steps
using an $\ell$-stage time integrator is
\begin{equation}
    P(m,s) = \min \left( P_\textnormal{IS}(m,s), P_\textnormal{IS}(m-1,s-\ell+1) \right) .
    \label{eq:p}
\end{equation}
\end{theorem}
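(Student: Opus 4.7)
The plan is to prove the theorem by a straightforward case analysis based on how the very last backward step of the reverse sweep (the one that reverses the first time step, $t_0 \to t_1$) is supported. Regardless of the schedule, just before that final adjoint step is taken, the algorithm must have, somewhere in its $s$ checkpointing units, either the initial state $\bu_0$ (so that the first step can be recomputed to regenerate its stages) or the stage values $\bU_1,\dots,\bU_\ell$ themselves (so that the adjoint step can be applied directly). I would first justify this dichotomy by noting that $\bu_0$ is only available at the start of the forward sweep, so if it is ever overwritten the only way to retain the information needed to reverse step~$1$ is to have captured the stages of step~$1$ into units before $\bu_0$ was discarded.

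Next I would handle the two resulting cases and relate each to the subproblem solved by Lemma~\ref{th:p1}. In Case~1, one unit is dedicated to holding $\bu_0$ throughout the entire computation; the remaining $s-1$ units are available for all other checkpoints. Under the convention used in Lemma~\ref{th:p1} (where $s$ denotes the total number of units and one of them holds the initial state), this situation matches the hypothesis of the lemma exactly, and its cost is $P_\textnormal{IS}(m,s)$. In Case~2, one executes the first forward step, obtaining $\bu_1$ together with the stages $\bU_1,\dots,\bU_\ell$; because the scheme is stiffly accurate, $\bU_\ell=\bu_1$, so storing all the stages and $\bu_1$ occupies exactly $\ell$ units. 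The original initial state $\bu_0$ may then be released. What remains is to reverse the last $m-1$ steps starting from $\bu_1$, which now plays the role of the stored initial state, with $s-\ell$ free units plus the unit holding $\bu_1$, i.e.\ a total of $s-\ell+1$ units in the $P_\textnormal{IS}$ accounting. By Lemma~\ref{th:p1} this costs $P_\textnormal{IS}(m-1,s-\ell+1)$ additional forward steps; the final backward step costs none because its stages are already in memory.

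Having shown that every admissible schedule falls into one of the two cases and has cost at least $P_\textnormal{IS}(m,s)$ or $P_\textnormal{IS}(m-1,s-\ell+1)$ respectively, and having exhibited schedules (the ones constructed by the two cases) that attain these bounds, taking the minimum yields the claimed formula for $P(m,s)$.

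The main obstacle I expect is a careful bookkeeping argument in Case~2: one must verify that the $\ell-1$ units tied up holding stages of the first step are genuinely unavailable to the subproblem of reversing steps $2,\dots,m$, and that using the stiffly-accurate identity $\bU_\ell=\bu_1$ correctly accounts for sharing one unit between ``last stage'' and ``new initial state.'' It is also worth confirming that no hybrid strategy (e.g.\ keeping $\bu_0$ for some time and then swapping in stage data later) can beat both cases; this reduces to the observation that any stage information for step~$1$ must be generated during the first traversal of that step, at which point either we have paid to keep $\bu_0$ or we have used the stage slot.
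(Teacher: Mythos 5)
Your proof is correct and follows essentially the same route as the paper: the paper likewise splits on whether the first time step is reversed by recomputing it from a checkpointed $\bu_0$ (giving $P_\textnormal{IS}(m,s)$) or by restoring its stored stage values, using stiff accuracy so that the $\ell$ stored stages include the state at index $1$, which then serves as the stored initial state of the remaining $(m-1)$-step subproblem with $s-\ell+1$ units, giving $P_\textnormal{IS}(m-1,s-\ell+1)$. Your extra care about the exhaustiveness of the dichotomy and the unit bookkeeping only makes explicit what the paper's two-sentence argument leaves implicit.
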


Based on Lemma \ref{th:p1}, we can design a dynamic programming algorithm to
compute $P_\textnormal{IS}(m,s)$ and save the results in a table. Computing
$P(m,s)$ requires querying only two values from the table. Given $m,s,\ell$ and
the index $lastcptstep$ and the type $lastcpttype$ of the last checkpoint, the
routine \texttt{CAMS}($lastcptstep,lastcpttype,s,m,\ell$) returns the index
$nextcptstep$ and the type $nextcptype$ of the next checkpoint. These two
returned variables are determined internally according to the choice of minimum
made in \eqref{eq:p1}. In particular, $nextcptstep$ depends on $\tilde{m}$, and
$nextcptype$ depends on which case of the two yields the minimum. The index $nextcptstep$
gives the position of the next checkpoint. The type $nextcptype$ can be either solution
or stage values. This information, often dubbed the \textbf{path} to reach the
minimum, can also be stored in tables when computing  $P_\textnormal{IS}(m,s)$
recursively. Tabulation of intermediate results is a standard procedure in
dynamic programming; thus we do not detail it here.

\subsection{\texttt{CAMS} for general multistage schemes} \label{sec:gmm}

Depending on how the first checkpoint is created, we split the problem into two
scenarios: (1) the initial state is checkpointed, and (2) the stage values of
the first step are checkpointed. The corresponding recurrence equations are
established in Lemmas \ref{th:pis_normal} and \ref{th:psv_normal}, respectively,
and are used to generate the final result in Theorem \ref{th:p_normal}. The
proofs are similar to the results in Section \ref{sec:stifflyaccurate}. Note
that the two subproblems are intertwined in the derivation, and they are solved
with \textbf{double dynamic programming}.
\begin{lemma}
Given $s$ allowed checkpointing units in memory and the \textbf{initial state}
stored in memory, the minimal number of additional forward steps needed for the
adjoint computation of $m$ time steps using an $\ell$-stage time integrator
satisfies 
\begin{equation}
    P_\textnormal{IS}(m,s) = \min
    \begin{cases}
     \min_{1 \leq \tilde{m} \leq m-2} \left(\tilde{m} + P_\textnormal{IS}(\tilde{m},s) + P_\textnormal{IS}(m-\tilde{m},s-1) \right)\\
    \min_{2 \leq \tilde{m} \leq m-1} \left(\tilde{m}-1 + P_\textnormal{IS}(\tilde{m}-1,s) + P_\textnormal{SV}(m-\tilde{m}+1,s-1) \right)
    \end{cases} .
    \label{eq:pis_normal}
\end{equation}
\label{th:pis_normal}
\end{lemma}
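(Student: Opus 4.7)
The plan is to adapt the case analysis used in Lemma \ref{th:p1} to the general (not necessarily stiffly accurate) multistage setting. The crucial new subtlety is that storing the stage values of a time step does \emph{not} automatically put the end-of-step solution vector in memory; accordingly, the ``stage only'' checkpoint branch must be handled by a distinct auxiliary subproblem $P_\textnormal{SV}$ rather than by $P_\textnormal{IS}$.

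I would begin by observing that any feasible schedule writes a ``next'' checkpoint at some time step $\tilde{m}$ and that this checkpoint is either (Case 1) the solution $\bu_{\tilde{m}}$ or (Case 2) the stage values of step $\tilde{m}$. These two possibilities produce the two lines of \eqref{eq:pis_normal}. In each case the reversal decomposes into a head (the earlier steps, reversed last) and a tail (the later steps, reversed first); the recomputation cost is the sum of the forward re-sweep needed to re-enter the head, the recursive cost of the head subproblem, and the recursive cost of the tail subproblem. For Case 1 the argument parallels Case 1 of Lemma \ref{th:p1} essentially verbatim: the tail of $m-\tilde{m}$ steps costs $P_\textnormal{IS}(m-\tilde{m},s-1)$ with $\bu_{\tilde{m}}$ serving as a new initial state, and the head of $\tilde{m}$ steps costs $P_\textnormal{IS}(\tilde{m},s)$ preceded by a forward re-sweep of $\tilde{m}$ steps to regenerate lost intermediate data. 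Summing gives the first line; the stated range $1 \leq \tilde{m} \leq m-2$ follows from a short argument that $\tilde{m} \in \{m-1,m\}$ is dominated by smaller $\tilde{m}$ or by Case 2 and may be omitted.

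For Case 2 the genuinely new ingredient enters. Saving the stages at step $\tilde{m}$ lets one reverse step $\tilde{m}$ at no recomputation cost but, in contrast to the stiffly accurate setting, does not deposit $\bu_{\tilde{m}}$ in memory. Step $\tilde{m}$ is therefore absorbed into the tail, whose length grows to $m-\tilde{m}+1$ and whose only preloaded information is the stage set at its first step---precisely the scenario captured by $P_\textnormal{SV}(m-\tilde{m}+1,s-1)$. The head shrinks correspondingly to $\tilde{m}-1$ steps and contributes $\tilde{m}-1 + P_\textnormal{IS}(\tilde{m}-1,s)$ after the analogous re-sweep. This yields the second line; the range $2 \leq \tilde{m} \leq m-1$ excludes $\tilde{m}=1$ (empty head, to be absorbed into the companion formula for $P(m,s)$) and $\tilde{m}=m$ (degenerate tail and redundant save, since those stages are still in memory when the initial sweep closes).

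The main obstacle I anticipate is the clean separation of $P_\textnormal{SV}$ from $P_\textnormal{IS}$: one must verify that after a stage-only checkpoint in a general multistage scheme no latent information about $\bu_{\tilde{m}}$ is available to the tail subproblem, so that the tail genuinely obeys the different recurrence of Lemma \ref{th:psv_normal} rather than of Lemma \ref{th:pis_normal}. A secondary point is the justification of the restricted ranges, which reduces to a short domination argument exploiting the monotonicity of $P_\textnormal{IS}$ in its first argument. Once these are in place, the minimum over the two cases is simultaneously an upper bound (each candidate is realized by an explicit schedule) and a lower bound (every optimal schedule falls into one of the two cases), establishing the stated identity.
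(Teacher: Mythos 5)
Your decomposition is exactly the one the paper uses: condition on the position $\tilde{m}$ and the type (solution versus stage values) of the next checkpoint, reverse the tail first via $P_\textnormal{IS}(m-\tilde{m},s-1)$ or $P_\textnormal{SV}(m-\tilde{m}+1,s-1)$, and charge $\tilde{m}$ (resp.\ $\tilde{m}-1$) forward re-sweep steps plus the recursive head cost --- the paper itself only sketches this by reference to the proof of Lemma~\ref{th:p1} and Figure~\ref{fig:p1p2}. Your proposal is correct and, if anything, slightly more explicit than the paper in justifying the restricted ranges of $\tilde{m}$ and the need to keep $P_\textnormal{SV}$ separate from $P_\textnormal{IS}$ in the non--stiffly-accurate case.
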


\begin{lemma}
Given $s$ allowed checkpointing units in memory and the \textbf{stage values} of
the first time step stored in memory, the minimal number of additional forward
steps needed for the adjoint computation of $m$ time steps using an $\ell$-stage
time integrator satisfies 
\begin{equation}
    P_\textnormal{SV}(m,s) = \min \left( P_\textnormal{IS}(m-1,s-\ell), P_\textnormal{SV}(m-1,s-\ell) \right) .
    \label{eq:psv_normal}
\end{equation}
\label{th:psv_normal}
\end{lemma}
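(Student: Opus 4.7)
The plan is to conduct a case analysis on how the reverse sweep is initiated from the given SV configuration. Since the stages of the first step are already in memory, the final backward step (from $\blambda_1$ to $\blambda_0$, which uses these stages) requires no forward recomputation and can be executed directly at the end. Consequently, the only remaining task is to reverse the remaining $m - 1$ time steps starting from $\bu_1$, and the stored stage values allow us to recover $\bu_1$ without any extra forward computation.

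I would then argue that there are exactly two ways the stored information can be transferred into the sub-problem on these $m - 1$ steps. In the first strategy, we treat $\bu_1$ as the initial state of a sub-problem of size $m - 1$: the stage values of the first step continue to occupy $\ell$ memory units (they must be retained for the final backward step $1 \to 0$), so only $s - \ell$ units are effectively available to the sub-problem, and its minimal cost is $P_\textnormal{IS}(m - 1, s - \ell)$ by the definition in Lemma \ref{th:pis_normal}. In the second strategy, we arrange for the sub-problem itself to start from an SV configuration of size $m - 1$; again $\ell$ units remain locked by the stages of step 1, giving cost $P_\textnormal{SV}(m - 1, s - \ell)$. Taking the minimum over the two strategies yields \eqref{eq:psv_normal}.

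The main obstacle, in contrast to Lemma \ref{th:pis_normal}, is justifying that no explicit minimization over an intermediate index $\tilde m$ appears here. The point is that once the stages of the first step are held in memory, the first backward step is already determined by those stages, so the divide-and-conquer split that gives rise to the $\tilde m$-iteration in the IS recurrence is collapsed — any further splitting of the remaining $m-1$ steps is already absorbed by the optimal cost of the recursive $P_\textnormal{IS}$ or $P_\textnormal{SV}$ sub-problem. I would make this rigorous with a short exchange argument: any schedule that attempts an intermediate split while still holding the stages of step 1 can be rewritten into one of the two canonical forms above without increasing the recomputation count, because the reserved $\ell$ units and the fixed handling of the first backward step are common to all such schedules.
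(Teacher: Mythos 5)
Your decomposition---reverse step $1$ last using the stored stages at zero cost, and reduce to a subproblem on the remaining $m-1$ steps whose type (IS or SV) is determined by what is checkpointed next---is the same structure the paper uses. But there is a genuine gap in your justification: you assert that ``the stored stage values allow us to recover $\bu_1$ without any extra forward computation.'' That is precisely the property that characterizes the \emph{stiffly accurate} case (where the last stage equals the end-of-step solution), and this lemma is for \emph{general} schemes, where it fails: reconstructing $\bu_1 = \bu_0 + h\sum_i b_i \f(\bU_i)$ from the stored $\bU_i$ costs essentially a full step's worth of function evaluations. Worse, if your claim were true, $\bu_1$ would never need to occupy a checkpointing unit of its own---it could be rematerialized for free from the $\ell$ locked units whenever needed---and the first branch would improve to $P_\textnormal{IS}(m-1,s-\ell+1)$, contradicting the very formula you are proving. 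The correct account is that $\bu_1$ (or the stages of step $2$) is captured at zero \emph{additional} cost only because the initial forward sweep passes through it anyway, and it must then be held in a unit drawn from the $s-\ell$ units not locked by the step-$1$ stages; that is where $P_\textnormal{IS}(m-1,s-\ell)$ and $P_\textnormal{SV}(m-1,s-\ell)$ come from.

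Second, your explanation of why no minimization over an intermediate index $\tilde m$ appears is off target. It is not that an intermediate split ``can be rewritten without increasing the recomputation count''; it is that an intermediate split is \emph{infeasible}. In the SV configuration there is no stored solution from which to restore and recompute, so if the next checkpoint were placed at any index beyond $1$ (for a solution) or beyond step $2$ (for stage values), the skipped steps could never be reversed---this is exactly the paper's remark that the second checkpoint must be placed after the second time step, since otherwise the second time step cannot be reversed. Once that feasibility observation is made, the exhaustiveness of the two branches is immediate and no exchange argument is needed.
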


\begin{theorem}
Given $s$ allowed checkpointing units in memory, the minimal number of
additional forward steps needed for the adjoint computation of $m$ time steps
using an $\ell$-stage time integrator is
\begin{equation}
    P(m,s) = \min \left( P_\textnormal{IS}(m,s), P_\textnormal{SV}(m,s) \right) \,.
    \label{eq:p_normal}
\end{equation}
\label{th:p_normal}
\end{theorem}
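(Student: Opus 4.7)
My plan is to prove the theorem by a case analysis on what resides in memory at the moment the final backward step (the adjoint update $1 \to 0$) is issued, exactly paralleling the argument used for the stiffly accurate version. The update equation \eqref{eqn:disadj_erk} at that step requires the stage values of the first time step; since $\bu_0$ has no predecessor in the integration, the only two ways to supply those stage values are (i) to keep $\bu_0$ alive in a checkpoint throughout the reverse sweep and recompute step $0 \to 1$ at the end, or (ii) to hold the $\ell$ stage vectors of step $0 \to 1$ continuously in memory from the instant they are first produced.

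First I would establish that options (i) and (ii) are exhaustive. The initial condition $\bu_0$ cannot be regenerated from any earlier state, so once evicted it is lost; hence if $\bu_0$ is not resident at the moment the last backward step is issued, the first-step stages must be. A schedule holding both is never strictly better than the corresponding stage-only schedule, so the minimum is attained within exactly one of the two classes. Next, by Lemma \ref{th:pis_normal} the optimal recomputation count within class (i) is $P_\textnormal{IS}(m,s)$, since this is precisely the scenario in which the initial state occupies one of the $s$ checkpointing units throughout the segment. By Lemma \ref{th:psv_normal} the optimal count within class (ii) is $P_\textnormal{SV}(m,s)$, since the $\ell$ stage vectors of the first step occupy $\ell$ of the $s$ units throughout. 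Taking the smaller of these two class-optima yields \eqref{eq:p_normal}.

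The step I expect to be the most delicate is the bookkeeping needed to align the accounting conventions used inside $P_\textnormal{IS}$ and $P_\textnormal{SV}$ with the raw storage constraint in the theorem. In particular, one must verify that the ``anchor counted within $s$'' convention of both auxiliary quantities is consistent with the definition of $P(m,s)$, and that a schedule which temporarily holds $\bu_0$ and later swaps in stages (or vice versa) is still accounted for through the recursive split embedded in Lemmas \ref{th:pis_normal} and \ref{th:psv_normal}. Once this consistency check is passed, the theorem follows as a direct combination of the two lemmas, mirroring the argument underlying the stiffly accurate analog \eqref{eq:p}.
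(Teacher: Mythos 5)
Your argument is essentially the paper's: the paper likewise splits on whether the first checkpoint consists of the initial state or of the stage values of the first time step, assigns the two cases to Lemmas \ref{th:pis_normal} and \ref{th:psv_normal}, and takes the minimum, declaring the combination ``straightforward'' without further detail. Your exhaustiveness argument---that the final backward step $1\to 0$ can only be served by a resident $\bu_0$ (irrecoverable once evicted) or by resident first-step stages---is just a slightly more explicit justification of the same dichotomy, so the proposal is correct and follows the paper's route.
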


The correctness of Lemmas \ref{th:pis_normal} and \ref{th:psv_normal} is
illustrated in Figure \ref{fig:p1p2}. For $P_\textnormal{IS}(m,s)$, we suppose
that the second checkpoint is placed after $\tilde{m}$th time step during the
forward sweep. Depending on the type of the second checkpoint, the second
subproblem, which reverses the sequence of time steps starting from the second
checkpoint, can be addressed by $P_\textnormal{IS}(m-\tilde{m},s-1)$ and
$P_\textnormal{SV}(m-\tilde{m-1}+1,s-1)$, respectively. During the reverse
sweep, the second subproblem is solved first, and then the initial state is
restored from the first checkpoint and integrated to the location of the second
checkpoint. Thus, $\tilde{m}$ or $\tilde{m}-1$ additional time steps are needed
for solving the first subproblem. A similar strategy can be applied for
$P_\textnormal{SV}(m,s)$ as well. But note that the second checkpoint must be
placed after the second time step; otherwise, the second time step cannot be
reversed.
\begin{figure}[ht]
  \centering
  \subfloat[$P_\textnormal{IS}(m,s)$]{
    \includegraphics[width=0.4\textwidth]{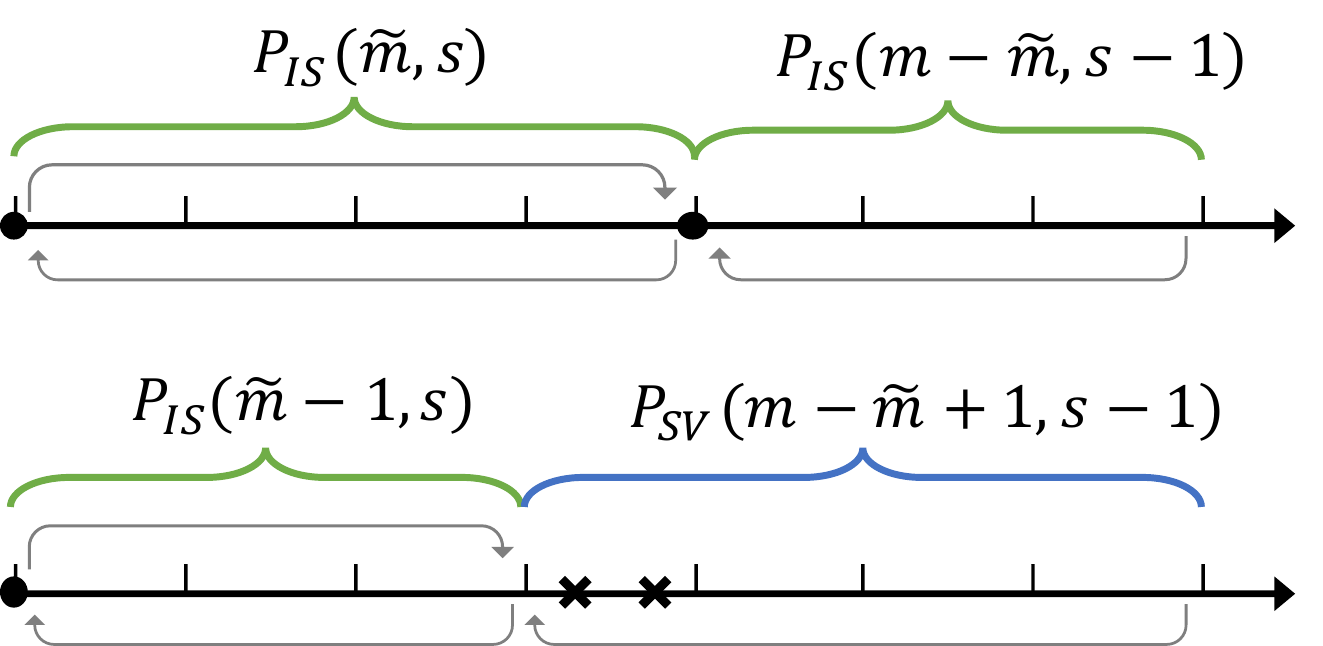}
  }
  \hfill
  \subfloat[$P_\textnormal{SV}(m,s)$]{
    \includegraphics[width=0.4\textwidth]{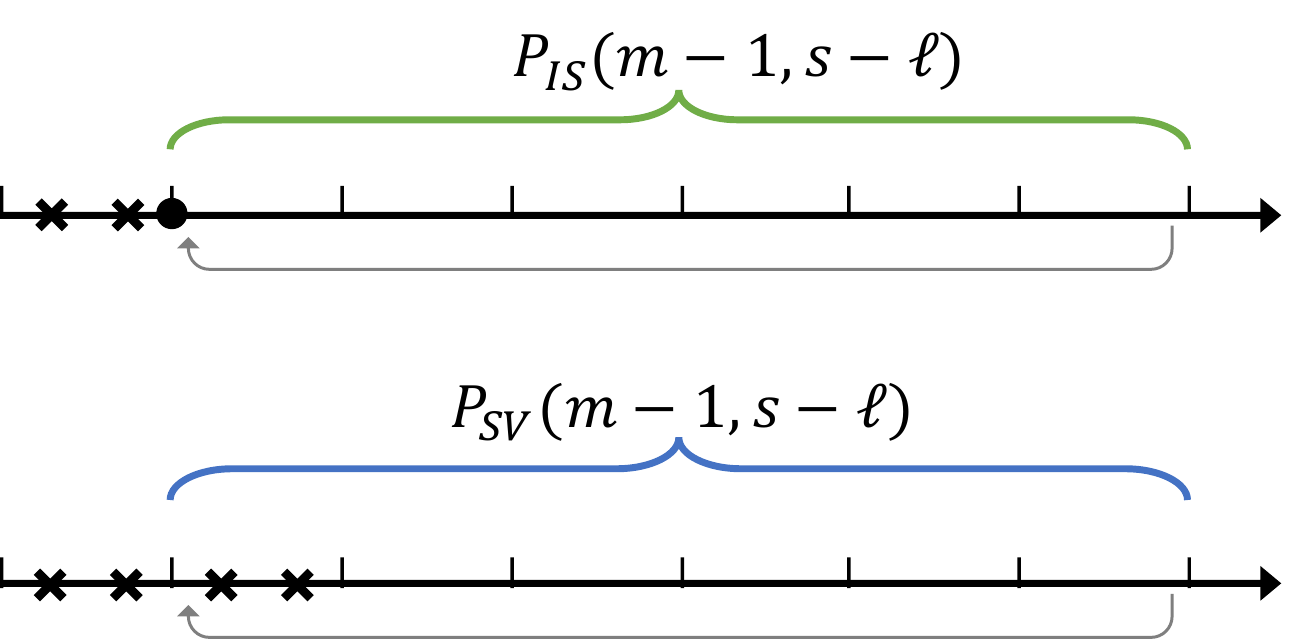}
  }
  \caption{Strategy to solve the intertwined subproblems
  $P_\textnormal{IS}(m,s)$ and $P_\textnormal{SV}(m,s)$. Each subproblem is
  broken into smaller subproblems. A dot stands for a solution, and crosses
  stand for stage values; $m,s,\ell$ are the number of time steps, the number of
  checkpointing units, and the number of stages, respectively.}
  \label{fig:p1p2}
\end{figure}

Theorem \ref{th:p_normal} combines the solution of the two subproblems, and the
proof is straightforward. Based on Lemmas \ref{th:pis_normal} and
\ref{th:psv_normal}, we can develop dynamic programming algorithms to compute
$P_\textnormal{IS}(i,j)$ and $P_\textnormal{SV}(i,j)$, respectively, and
tabulate the values and the path information for any input $i \leq m$ and $j\leq
s$. The resulting algorithm for the adjoint computation is almost the same as
Algorithm \ref{alg:cams} and thus is omitted for brevity---except that an
additional type of the checkpoint is considered. Specifically, the choices of a
checkpoint type include solution only, stage values only, and solution plus
stage values. Note that Lemmas \ref{th:pis_normal} and \ref{th:psv_normal}
distinguish between the first two choices. If the next checkpoint is stage
values and the checkpoint after the next is a solution calculated directly from
the stage values, it makes the implementation easier to fuse the two checkpoints
into a new type of checkpoint because both of them are available at the end of a
time step.

\subsection{Performance analysis for \text{CAMS}}

\begin{proposition}
Using the \texttt{CAMS} algorithm takes no more recomputations than using the
\texttt{Revolve} algorithm and the modified \texttt{Revolve} algorithm.
\end{proposition}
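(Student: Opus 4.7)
The plan is to show that the dynamic program underlying \texttt{CAMS} searches over a space of schedules that strictly contains the schedules considered by \texttt{Revolve} and by the modified \texttt{Revolve}, when all three are compared under the same budget of checkpointing units. Since \texttt{CAMS} is constructed to return the minimum over its search space, this containment immediately yields the desired inequality.

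First I would observe that the recurrence \eqref{eq:p1} (and, in the general case, the pair \eqref{eq:pis_normal}--\eqref{eq:psv_normal}) takes a pointwise minimum over two branches. One branch stores only a solution as the next checkpoint and consumes a single unit; this branch, in isolation, reproduces exactly the decision that \texttt{Revolve} makes at every splitting point, namely selecting a position for the next solution checkpoint and recursing on the two halves with one fewer available unit. The other branch stores the stage values (and, in stiffly accurate schemes, identifies the final stage with the end-of-step solution), consumes $\ell$ units, and allows the current time step to be reversed directly; this is exactly the decision pattern of the modified \texttt{Revolve}. Restricting \eqref{eq:p1} to its first branch and solving the resulting recurrence by induction on $m$ and $s$ recovers $p(m,s)$ from Proposition \ref{prop:revolve}; restricting to the second branch, with the unit budget rescaled so that each modified-\texttt{Revolve} checkpoint costs $\ell$ units, recovers $\tilde{p}$ from Proposition \ref{prop:recomp}. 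Because \eqref{eq:p1} takes the minimum of strictly more options at every stage, a straightforward induction on $m+s$ preserves both $P_\textnormal{IS}(m,s) \le p(m,s)$ and $P_\textnormal{IS}(m,s) \le \tilde{p}(m,\lfloor s/\ell \rfloor)$, and these bounds propagate to $P(m,s)$ through \eqref{eq:p}. The same argument applies to the general non-stiffly-accurate case using Theorem \ref{th:p_normal}, with the stage-values-only and the fused checkpoint types playing the role of additional branches that can only further decrease the minimum.

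The main obstacle is not logical but notational: one must compare the three algorithms under a single storage measure. \texttt{Revolve} counts a checkpoint as one solution (one unit), whereas modified \texttt{Revolve} counts a checkpoint as one solution plus the stages, i.e., $\ell$ units in the stiffly accurate case and $\ell+1$ in the general case. The translation from a \texttt{Revolve} schedule on $s$ solution checkpoints, or a modified \texttt{Revolve} schedule on $\lfloor s/\ell \rfloor$ combined checkpoints, into an execution path of the \texttt{CAMS} dynamic program that consumes at most $s$ units at every moment must respect the in-memory unit accounting throughout the reverse sweep---checkpoints must be created and released at exactly the indices used by the source schedule. Once this unit-level embedding is spelled out, feasibility within the \texttt{CAMS} search space follows directly, and the proposition is established.
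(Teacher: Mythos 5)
Your argument is essentially the paper's proof: both compare the \texttt{CAMS} recurrences branch by branch against the recurrences satisfied by \texttt{Revolve} and modified \texttt{Revolve}, and conclude by induction that a minimum taken over a superset of checkpointing decisions cannot be larger. The only caveat is that the stage-value branch of \eqref{eq:p1} (resp. \eqref{eq:pis_normal}--\eqref{eq:psv_normal}) does not literally reproduce the modified-\texttt{Revolve} recurrence --- the subproblem sizes and remaining unit counts differ by one, in \texttt{CAMS}'s favor --- so the domination step needs monotonicity of $P_\textnormal{IS}$ in both arguments, which the paper invokes explicitly (as convexity in the number of steps) and you leave implicit.
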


\begin{proof}
By construction, to reverse $m$ time steps with $s$ checkpointing units, the
\texttt{Revolve} algorithm requires the number of recomputations to satisfy
\begin{equation}
    P_\textnormal{R}(m,s) = \min_{1\leq \tilde{m} \leq m} \left( \tilde{m} + P_\textnormal{R}(\tilde{m},s) + P_\textnormal{R}(m-\tilde{m},s-1) \right),
\end{equation}
as shown in \cite{Griewank2000}.

Following the same methodology, we can derive a recurrence equation for the
modified \texttt{Revolve} algorithm: 
\begin{equation}
    P_\textnormal{MR}(m,s) = \min_{1\leq \tilde{m} \leq m} \left( \tilde{m}-1 + P_\textnormal{MR}(\tilde{m}-1,s) + P_\textnormal{MR}(m-\tilde{m}+1,s-1-\ell) \right).
\end{equation}

Now we compare $P_\textnormal{R}(m,s)$ and $P_\textnormal{MR}(m,s)$ with $P(m,s)$ obtained with our
\texttt{CAMS} algorithm. According to \eqref{eq:p_normal}, $P(m,s)$ is always
less than or equal to $P_\textnormal{IS}(m,s)$. The first case in
\eqref{eq:pis_normal} indicates $P_\textnormal{IS}(m,s) \leq P_\textnormal{R}(m,s)$. The
second case in \eqref{eq:pis_normal} indicates
\begin{align*}
P_\textnormal{IS}(m,s) \leq      \min_{2 \leq \tilde{m} \leq m-1} \left(\tilde{m}-1 + P_\textnormal{IS}(\tilde{m}-1,s) + P_\textnormal{SV}(m-\tilde{m}+1,s-1) \right) \\
    \leq \min_{2 \leq \tilde{m} \leq m-1} \left(\tilde{m}-1 + P_\textnormal{IS}(\tilde{m}-1,s) + P_\textnormal{IS}(m-\tilde{m},s-1-\ell) \right) .
\end{align*}
Note that $P_\textnormal{IS}(i,j)$ is a convex function in $i \in {1,\dots,m}$.
Therefore,
\[
P_\textnormal{IS}(m-\tilde{m},s-1-\ell) \leq P_\textnormal{IS}(m-\tilde{m}+1,s-1-\ell).
\]
Then we obtain
\[
P_\textnormal{IS}(m,s) \leq P_\textnormal{MR}(m,s).
\]
\end{proof}

We can conclude with this proposition that the number of recomputations needed
by \texttt{CAMS} is bounded by the number of recomputations needed by
\texttt{Revolve}, which is given in \eqref{eqn:p}. We note that the lower bound
for \texttt{Revolve} is $m-1$ whereas \texttt{CAMS} takes zero recomputations if
there is sufficient memory.

Figure \ref{fig:acms_10_6} illustrates the application of \texttt{Revolve} and
\texttt{CAMS} for reversing $10$ time steps given $6$ allowable checkpointing
units. Each application starts from a forward time integration, at the end of
which all the checkpointing slots are filled. The distribution of the
checkpoints is determined by the corresponding checkpointing algorithm (e.g.,
Alg. \ref{alg:cams}). In the backward integration, a solution can be restored
from checkpoints and then used for recomputing the intermediate states; for
\texttt{CAMS}, stage values can be restored and used for reversing the time step
directly without any recomputation. As a result, \texttt{revolve} takes $12$
recomputations whereas \texttt{CAMS} takes $8$ recomputations, or $6$
recomputations if the integration method is stiffly accurate.
\begin{figure}[!ht]
\centering
\subfloat[\texttt{Revolve} (12 recomputations)]{
\begin{tikzpicture}[node distance=0.9cm,auto,>=stealth']
  \begin{scope}
    \node [stepnum, pin={[store]}] (sn0) {$0$};
    \node [stepnum, pin={[store]}] (sn1) [right of=sn0] {$1$}
      edge [edgefrom] (sn0);
    \node [stepnum, pin={[store]}] (sn2) [right of=sn1] {$2$}
      edge [edgefrom] (sn1);
    \node [stepnum, pin={[store]}] (sn3) [right of=sn2] {$3$}
      edge [edgefrom] (sn2);
    \node [stepnum] (sn4) [right of=sn3] {$4$}
      edge [edgefrom] (sn3);
    \node [stepnum, pin={[store]}] (sn5) [right of=sn4] {$5$}
      edge [edgefrom] (sn4);
    \node [stepnum] (sn6) [right of=sn5] {$6$}
      edge [edgefrom] (sn5);
    \node [stepnum, pin={[store]}] (sn7) [right of=sn6] {$7$}
      edge [edgefrom] (sn6);
    \node [stepnum] (sn8) [right of=sn7] {$8$}
      edge [edgefrom] (sn7);
    \node [stepnum] (sn9) [right of=sn8] {$9$}
      edge [edgefrom] (sn8);
    \node [stepnum] (sn10) [right of=sn9] {$10$}
      edge [edgefrom] (sn9)
      edge [edgebend] (sn9);
  \end{scope}
  \begin{scope}[shift={(9.9cm,0cm)}]
    \node [stepnum,pin={[restorekeepinstack]}] (sn7) {$7$}; 
    \node [stepnum] (sn8) [right of=sn7] {$8$}
      edge [edgefrom] (sn7);
     \node [stepnum] (sn9) [right of=sn8] {$9$}
      edge [edgefrom] (sn8)  
      edge [edgebend] (sn8);
  \end{scope}
  \begin{scope}[shift={(0cm,-1.5cm)}]
    \node [stepnum,pin={[restore]}] (sn7) {$7$}; 
    \node [stepnum] (sn8) [right of=sn7] {$8$}
      edge [edgefrom] (sn7)
      edge [edgebend] (sn7);
  \end{scope}
  \begin{scope}[shift={(2.7cm,-1.5cm)}]
    \node [stepnum,pin={[restorekeepinstack]}] (sn5) {$5$};
    \node [stepnum] (sn6) [right of=sn5] {$6$}
      edge [edgefrom] (sn5);
    \node [stepnum] (sn7) [right of=sn6] {$7$}
      edge [edgefrom] (sn6)
      edge [edgebend] (sn6);
  \end{scope}
  \begin{scope}[shift={(6.3cm,-1.5cm)}]
    \node [stepnum,pin={[restore]}] (sn5) {$5$};
    \node [stepnum] (sn6) [right of=sn5] {$6$}
      edge [edgefrom] (sn5)
      edge [edgebend] (sn5);
  \end{scope}
  \begin{scope}[shift={(9.9cm,-1.5cm)}]
  \node [stepnum,pin={[restore]}] (sn3) {$3$};
    \node [stepnum] (sn4) [right of=sn3] {$4$}
      edge [edgefrom] (sn3);
    \node [stepnum] (sn6) [right of=sn4] {$5$}
      edge [edgefrom] (sn4)
      edge [edgebend] (sn4);
  \end{scope}
  \begin{scope}[shift={(0cm,-3cm)}]
    \node [stepnum, pin={[restorekeepinstack]}] (sn3) {$3$};
    \node [stepnum] (sn4) [right of=sn3] {$4$}
      edge [edgefrom] (sn3)
      edge [edgebend] (sn3);
  \end{scope}
  \begin{scope}[shift={(2.7cm,-3cm)}]
    \node [stepnum, pin={[restore]}] (sn2) {$2$};
    \node [stepnum] (sn3) [right of=sn2] {$3$}
      edge [edgefrom] (sn2)
      edge [edgebend] (sn2);
  \end{scope}
  \begin{scope}[shift={(5.4cm,-3cm)}]
    \node [stepnum, pin={[restore]}] (sn1) {$1$};
    \node [stepnum] (sn2) [right of=sn1] {$2$}
      edge [edgefrom] (sn1)
      edge [edgebend] (sn1);
  \end{scope}
  \begin{scope}[shift={(8.1cm,-3cm)}]
    \node [stepnum, pin={[restore]}] (sn0) {$0$};
    \node [stepnum] (sn1) [right of=sn0] {$1$}
      edge [edgefrom] (sn0)
      edge [edgebend] (sn0);
  \end{scope}
\end{tikzpicture}
\label{fig:revolve}
}\\
\subfloat[\texttt{CAMS} with stiffly accurate two-stage schemes (6 recomputations)]{
\begin{tikzpicture}[node distance=0.9cm,auto,>=stealth']
  \begin{scope}
    \node [stepnum] (sn0) {$0$};
    \node [stepnum, label={[label,yshift=0.35cm,stagevalue,pin={[store]}] left :},label={[label distance=0.2cm,yshift=0.35cm,stagevalue,pin={[store]}] left:}] (sn1) [right of=sn0] {$1$}
      edge [edgefrom] (sn0);
    \node [stepnum] (sn2) [right of=sn1] {$2$}
      edge [edgefrom] (sn1);
    \node [stepnum] (sn3) [right of=sn2] {$3$}
      edge [edgefrom] (sn2);
    \node [stepnum] (sn4) [right of=sn3] {$4$}
      edge [edgefrom] (sn3);
    \node [stepnum, label={[label,yshift=0.35cm,stagevalue,pin={[store]}] left :},label={[label distance=0.2cm,yshift=0.35cm,stagevalue,pin={[store]}] left:}] (sn5) [right of=sn4] {$5$}
      edge [edgefrom] (sn4);
    \node [stepnum] (sn6) [right of=sn5] {$6$}
      edge [edgefrom] (sn5);
    \node [stepnum] (sn7) [right of=sn6] {$7$}
      edge [edgefrom] (sn6);
    \node [stepnum, label={[label,distance=0.05cm,yshift=0.3cm,stagevalue,pin={[store]}] left :},label={[label distance=0.25cm,yshift=0.3cm,stagevalue,pin={[store]}] left:}] (sn8) [right of=sn7] {$8$}
      edge [edgefrom] (sn7);
    \node [stepnum] (sn9) [right of=sn8] {$9$}
      edge [edgefrom] (sn8);
    \node [stepnum] (sn10) [right of=sn9] {$10$}
      edge [edgefrom] (sn9)
      edge [edgebend] (sn9);
  \end{scope}
  \begin{scope}[shift={(10.8cm,0cm)}]
    \node [stepnum,pin={[restorekeepinstack]}] (sn8) {$8$};
    \node [stepnum] (sn9) [right of=sn8] {$9$}
      edge [edgefrom] (sn8)  
      edge [edgebend] (sn8);
  \end{scope}
  \begin{scope}[shift={(0cm,-1.5cm)}]
    \node [stepnum] (sn7) {$7$}; 
    \node [stepnum, label={[label,yshift=0.35cm,stagevalue,pin={[restore]}]left:}, label={[label distance=0.2cm,yshift=0.35cm,stagevalue,pin={[restore]}]left:}] (sn8) [right of=sn7] {$8$}
      edge [edgebend] (sn7);
  \end{scope}
  \begin{scope}[shift={(3.6cm,-1.5cm)}]
    \node [stepnum, pin={[restorekeepinstack]}] (sn5) {$5$};
    \node [stepnum,label={[label,yshift=0.35cm,stagevalue,pin={[store]}] left :},label={[label distance=0.2cm,yshift=0.35cm,stagevalue,pin={[store]}] left:}] (sn6) [right of=sn5] {$6$}
      edge [edgefrom] (sn5);
    \node [stepnum] (sn7) [right of=sn6] {$7$}
      edge [edgefrom] (sn6)
      edge [edgebend] (sn6);
  \end{scope}
  \begin{scope}[shift={(7.2cm,-1.5cm)}]
    \node [stepnum] (sn5) {$5$};
    \node [stepnum,label={[label,yshift=0.35cm,stagevalue,pin={[restore]}]left:}, label={[label distance=0.2cm,yshift=0.35cm,stagevalue,pin={[restore]}]left:}] (sn6) [right of=sn5] {$6$}
      edge [edgebend] (sn5);
  \end{scope}
  \begin{scope}[shift={(10.8cm,-1.5cm)}]
    \node [stepnum] (sn4) {$4$};
    \node [stepnum,label={[label,yshift=0.35cm,stagevalue,pin={[restore]}]left:}, label={[label distance=0.2cm,yshift=0.35cm,stagevalue,pin={[restore]}]left:}] (sn6) [right of=sn4] {$5$}
      edge [edgebend] (sn4);
  \end{scope}
  \begin{scope}[shift={(0cm,-3cm)}]
    \node [stepnum, pin={[restorekeepinstack]}] (sn1) {$1$};
    \node [stepnum,label={[label,yshift=0.35cm,stagevalue,pin={[store]}] left :},label={[label distance=0.2cm,yshift=0.35cm,stagevalue,pin={[store]}] left:}] (sn2) [right of=sn1] {$2$} edge [edgefrom] (sn1);
    \node [stepnum,label={[label,yshift=0.35cm,stagevalue,pin={[store]}] left :},label={[label distance=0.2cm,yshift=0.35cm,stagevalue,pin={[store]}] left:}] (sn3) [right of=sn2] {$3$}
      edge [edgefrom] (sn2);
    \node [stepnum] (sn4) [right of=sn3] {$4$}
      edge [edgefrom] (sn3)
      edge [edgebend] (sn3);
  \end{scope}
  \begin{scope}[shift={(5.4cm,-3cm)}]
    \node [stepnum] (sn2) {$2$};
    \node [stepnum,label={[label, yshift=0.35cm,stagevalue,pin={[restore]}]left:}, label={[label distance=0.2cm,yshift=0.35cm,stagevalue,pin={[restore]}]left:}] (sn3) [right of=sn2] {$3$}
      edge [edgebend] (sn2);
  \end{scope}
  \begin{scope}[shift={(8.1cm,-3cm)}]
    \node [stepnum] (sn1) {$1$};
    \node [stepnum,label={[label, yshift=0.35cm,stagevalue,pin={[restore]}]left:}, label={[label distance=0.2cm,yshift=0.35cm,stagevalue,pin={[restore]}]left:}] (sn2) [right of=sn1] {$2$}
      edge [edgebend] (sn1);
  \end{scope}
  \begin{scope}[shift={(10.8cm,-3cm)}]
    \node [stepnum] (sn0) {$0$};
    \node [stepnum,label={[label, yshift=0.35cm,stagevalue,pin={[restore]}]left:}, label={[label distance=0.2cm,yshift=0.35cm,stagevalue,pin={[restore]}]left:}] (sn1) [right of=sn0] {$1$}
      edge [edgebend] (sn0);
  \end{scope}
\end{tikzpicture}
\label{fig:stifflyaccurate}
}\\
\subfloat[\texttt{CAMS} with general two-stage schemes ($8$ recomputations)]{
\begin{tikzpicture}[node distance=0.9cm,auto,>=stealth']
  \begin{scope}
    \node [stepnum, pin={[store]}] (sn0) {$0$};
    \node [stepnum] (sn1) [right of=sn0] {$1$}
      edge [edgefrom] (sn0);
    \node [stepnum, pin={[store]}] (sn2) [right of=sn1] {$2$}
      edge [edgefrom] (sn1);
    \node [stepnum] (sn3) [right of=sn2] {$3$}
      edge [edgefrom] (sn2);
    \node [stepnum, pin={[store]}] (sn4) [right of=sn3] {$4$}
      edge [edgefrom] (sn3);
    \node [stepnum] (sn5) [right of=sn4] {$5$}
      edge [edgefrom] (sn4);
    \node [stepnum, pin={[store]}] (sn6) [right of=sn5] {$6$}
      edge [edgefrom] (sn5);
    \node [stepnum] (sn7) [right of=sn6] {$7$}
      edge [edgefrom] (sn6);
    \node [stepnum] (sn8) [right of=sn7] {$8$}
      edge [edgefrom] (sn7);
    \node [stepnum,label={[label,yshift=0.35cm,stagevalue,pin={[store]}] left :},label={[label distance=0.2cm,yshift=0.35cm,stagevalue,pin={[store]}] left:}] (sn9) [right of=sn8] {$9$}
      edge [edgefrom] (sn8);
    \node [stepnum] (sn10) [right of=sn9] {$10$}
      edge [edgefrom] (sn9)
      edge [edgebend] (sn9);
  \end{scope}
  \begin{scope}[shift={(9.9cm,0cm)}]
    \node [stepnum] (sn8) {$8$};
     \node [stepnum,label={[label, yshift=0.35cm,stagevalue,pin={[restore]}]left:}, label={[label distance=0.2cm,yshift=0.35cm,stagevalue,pin={[restore]}]left:}] (sn9) [right of=sn8] {$9$}
      edge [edgebend] (sn8);
  \end{scope}
  \begin{scope}[shift={(0cm,-1.5cm)}]
    \node [stepnum, pin={[restorekeepinstack]}] (sn6) {$6$}; 
    \node [stepnum,label={[label,yshift=0.35cm,stagevalue,pin={[store]}] left :},label={[label distance=0.2cm,yshift=0.35cm,stagevalue,pin={[store]}] left:}] (sn7) [right of=sn6] {$7$}
    edge [edgefrom] (sn6);
    \node [stepnum] (sn8) [right of=sn7] {$8$}
      edge [edgefrom] (sn7)
      edge [edgebend] (sn7);
  \end{scope}
  \begin{scope}[shift={(2.7cm,-1.5cm)}]
    \node [stepnum] (sn6) {$6$};
    \node [stepnum,label={[label, yshift=0.35cm,stagevalue,pin={[restore]}]left:}, label={[label distance=0.2cm,yshift=0.35cm,stagevalue,pin={[restore]}]left:}] (sn7) [right of=sn6] {$7$}
      edge [edgebend] (sn6);
  \end{scope}
  \begin{scope}[shift={(7.2cm,-1.5cm)}]
    \node [stepnum, pin={[restorekeepinstack]}] (sn4) {$4$}; 
    \node [stepnum,label={[label,yshift=0.35cm,stagevalue,pin={[store]}] left :},label={[label distance=0.2cm,yshift=0.35cm,stagevalue,pin={[store]}] left:}] (sn5) [right of=sn4] {$5$}
      edge [edgefrom] (sn4);
    \node [stepnum] (sn6) [right of=sn5] {$6$}
      edge [edgefrom] (sn5)
      edge [edgebend] (sn5);
  \end{scope}
  \begin{scope}[shift={(9.9cm,-1.5cm)}]
    \node [stepnum] (sn4) {$4$};
    \node [stepnum,label={[label, yshift=0.35cm,stagevalue,pin={[restore]}]left:}, label={[label distance=0.2cm,yshift=0.35cm,stagevalue,pin={[restore]}]left:}] (sn5) [right of=sn4] {$5$}
      edge [edgebend] (sn4);
  \end{scope}
  \begin{scope}[shift={(0cm,-3cm)}]
    \node [stepnum, pin={[restore]}] (sn2) {$2$}; 
    \node [stepnum,label={[label,yshift=0.35cm,stagevalue,pin={[store]}] left :},label={[label distance=0.2cm,yshift=0.35cm,stagevalue,pin={[store]}] left:}] (sn3) [right of=sn2] {$3$}
      edge [edgefrom] (sn2);
    \node [stepnum] (sn4) [right of=sn3] {$4$}
      edge [edgefrom] (sn3)
      edge [edgebend] (sn3);
  \end{scope}
  \begin{scope}[shift={(2.7cm,-3cm)}]
    \node [stepnum] (sn2) {$2$};
    \node [stepnum,label={[label, yshift=0.35cm,stagevalue,pin={[restore]}]left:}, label={[label distance=0.2cm,yshift=0.35cm,stagevalue,pin={[restore]}]left:}] (sn3) [right of=sn2] {$3$}
      edge [edgebend] (sn2);
  \end{scope}
  \begin{scope}[shift={(7.2cm,-3cm)}]
    \node [stepnum, pin={[restore]}] (sn0) {$0$}; 
    \node [stepnum,label={[label,yshift=0.35cm,stagevalue,pin={[store]}] left :},label={[label distance=0.2cm,yshift=0.35cm,stagevalue,pin={[store]}] left:}] (sn1) [right of=sn0] {$1$}
      edge [edgefrom] (sn0);
    \node [stepnum] (sn2) [right of=sn1] {$2$}
      edge [edgefrom] (sn1)
      edge [edgebend] (sn1);
  \end{scope}
  \begin{scope}[shift={(9.9cm,-3cm)}]
    \node [stepnum] (sn0) {$0$};
    \node [stepnum,label={[label, yshift=0.35cm,stagevalue,pin={[restore]}]left:}, label={[label distance=0.2cm,yshift=0.35cm,stagevalue,pin={[restore]}]left:}] (sn1) [right of=sn0] {$1$}
      edge [edgebend] (sn0);
  \end{scope}
\end{tikzpicture}
\label{fig:normal}
}
\caption{Application of \texttt{CAMS} and \texttt{Revolve} to reversing $10$
time steps given $6$ allowable checkpointing units. The up arrow and down arrow
stand for the``store'' operation and ``restore'' operation, respectively. The down arrows with solid lines indicate that the checkpointing units can be discarded after being used. The down arrows with dashed lines mean to restore the unit without removing it from memory.}
\label{fig:acms_10_6}
\end{figure}
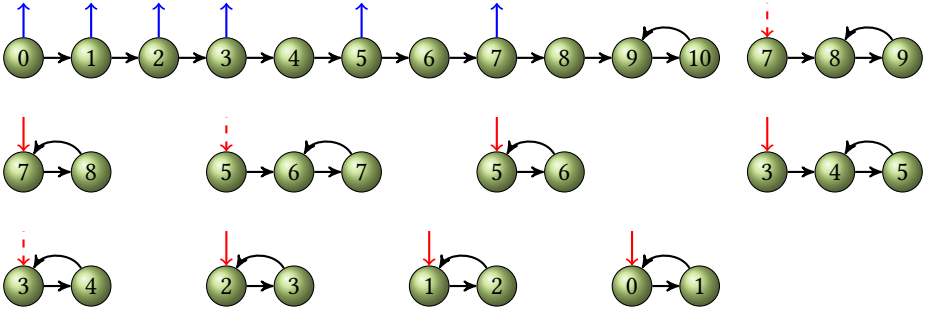
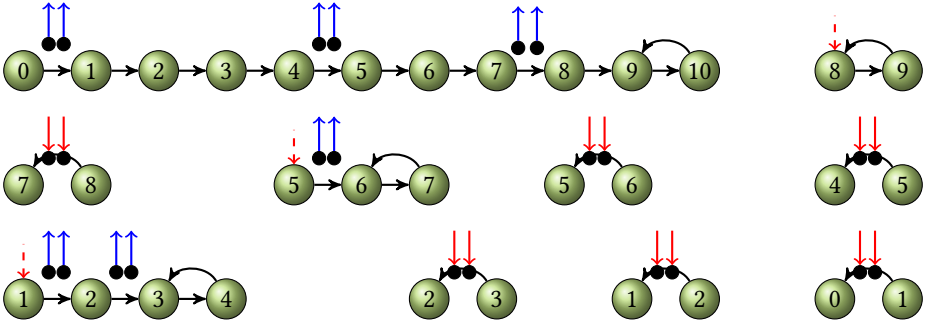
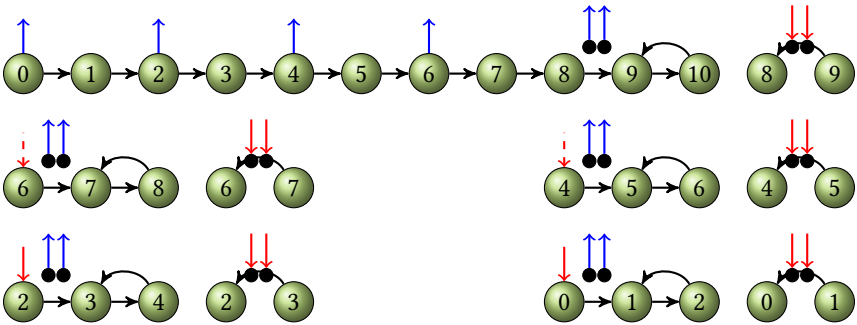

To further study the performance of our algorithms, we plot in Figure
\ref{fig:acms} the actual number of recomputations taken in the algorithm versus
the number of time steps to be reversed, and we compare our algorithms with the
classical \texttt{Revolve} algorithm. For a fair comparison, the same number of
checkpointing units is considered so that the total amount of memory usage is
the same for both algorithms. Figure \ref{fig:acms} shows that the two
\texttt{CAMS} variants outperform \texttt{Revolve} significantly. For $30$
checkpointing units and $300$ time steps, \texttt{CAMS-GEN} takes $210$ fewer
recomputations than \texttt{Revolve} does, and \texttt{CAMS-SA} takes $211$
fewer recomputations than \texttt{Revolve} does. If we assume the computational
cost of a forward step is constant, then the result implies an approximate
speedup of $1.6$ times in running time for the adjoint computation. For $60$
checkpointing units and $300$ time steps, \texttt{CAMS-GEN} and \texttt{CAMS-SA}
result in $261$ and $269$ fewer recomputations, respectively, which can be
translated into an estimated speedup of $2$ times. As the number of time steps
increases, the gap between \texttt{Revolve} and \texttt{CAMS} can be further
enlarged. The modified \texttt{Revolve} also performs better than
\texttt{Revolve} but will eventually be surpassed by \texttt{Revolve}.
Furthermore, when the number of time steps is small, which usually means there
is sufficient memory, no recomputation is needed by modified \texttt{Revolve},
whereas \texttt{Revolve} requires the number of recomputations to be at least as
large as one less than the number of time steps.
\begin{figure}[ht]
  \centering
  \includegraphics[width=0.95\textwidth]{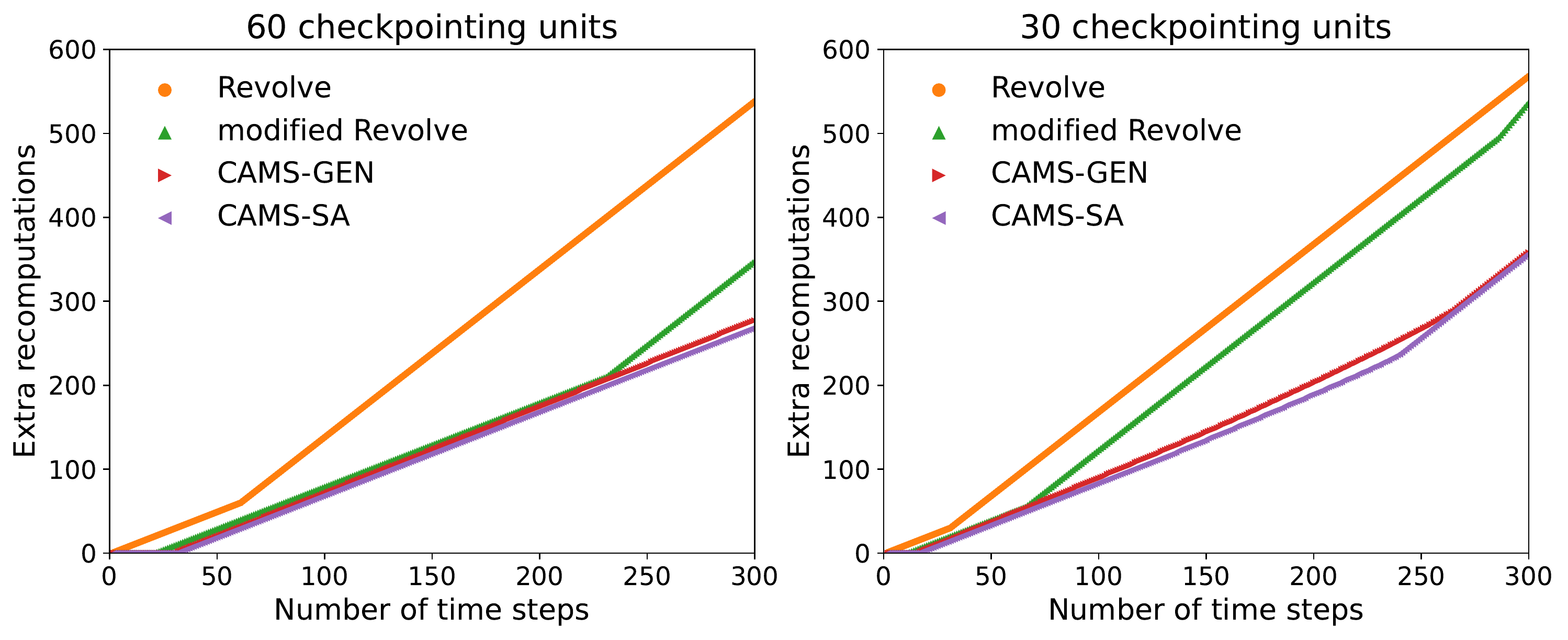}
  \caption{ 
  Performance comparison of various checkpointing algorithms. The plotted data is generated for $2$-stage time-stepping schemes ($\ell=2$).}
  \label{fig:acms}
\end{figure}

Figure \ref{fig:acms2} shows how the needed recomputations vary with the number
of allowable checkpointing units for a fixed number of time steps. This
indicates the memory requirement of each algorithm for a particular
time-to-solution budget.
In the ideal scenario when there is sufficient memory, using modified
\texttt{Revolve} or \texttt{CAMS} can avoid recomputation completely, making
the reverse sweep two times faster than using \texttt{Revolve}, provided the
cost of a forward step is comparable to the cost of a backward step.
\begin{figure}[ht]
  \centering
  \includegraphics[width=0.95\textwidth]{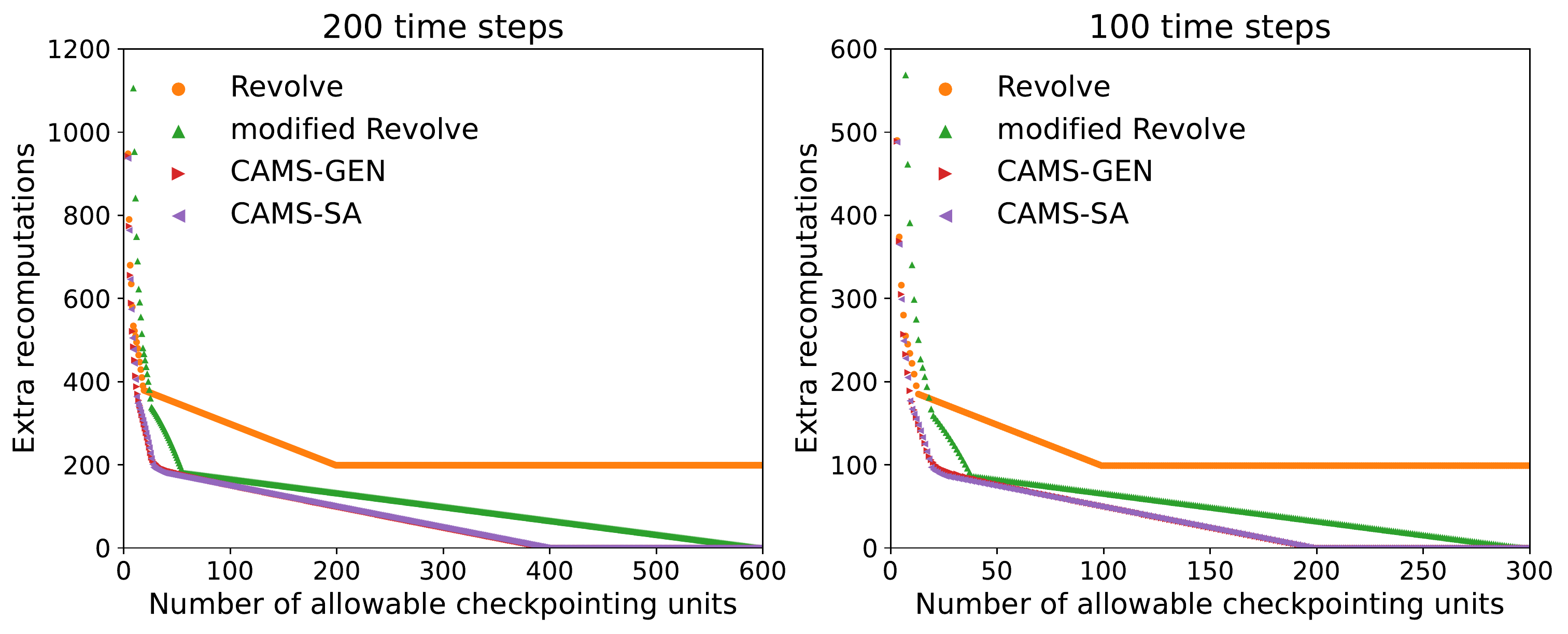}
  \caption{Performance comparison of various checkpointing algorithms. The plotted data is computed for time integration methods with two stages ($\ell=2$).}
  \label{fig:acms2}
\end{figure}

\section{Utilization of \text{Revolve} and \text{CAMS} in a discrete adjoint ODE solver}

\subsection{Using \texttt{Revolve} in a discrete adjoint ODE solver}

The \texttt{Revolve} library is designed to provide explicit control for
conducting forward sweeps and reverse sweeps in adjoint computations. Its user
must implement primitive operations such as performing a forward and backward
step, saving/restoring a checkpoint, and executing these operations in the order
guided by \texttt{Revolve}. Thus, it can be intrusive to incorporate
\texttt{Revolve} in other simulation software such as \texttt{PETSc}; and the
workflow can be difficult to manage, especially when the software has an
established framework for time integration and adaptive time-step control. To
mitigate the intrusive effects, we use \texttt{Revolve} differently so that its
role becomes that of a ``\textbf{consultant}'' rather than a ``\textbf{controller}.'' Algorithm
\ref{alg:revolve} describes our workflow for the adjoint computation with
checkpointing. \texttt{Revolve} relies on several key parameters---\REVcapo{},
\REVfine{}, and \REVcheck{}---and updates them internally. These parameters are
described in Table \ref{tab:revolve_parameter}. For  ease of implementation,
we use a counter \texttt{stepstogo} to track the number of steps to advance. The
routine \texttt{forwardSweep(ind,n,state)} advances the solution $n$ steps from
the $ind$th time point, which is easy to implement in ODE solvers. It can be
used to perform a full forward run and can also be reused to perform the
recomputations in the backward adjoint run. \texttt{revolveForward}() wraps calls
to the API routine \textsc{revolve}() and is intended to guide the selection of
checkpoints. Because each call to \textsc{revolve}() changes the internal states
of \textsc{revolve}, we must carefully control the number of times
\textsc{revolve}() is called in \texttt{revolveForward}() based on the counter
\texttt{stepstogo} and the returned value of the previous calls to
\textsc{revolve}().

Despite the convoluted manipulations of calls to \textsc{revolve}(),  the resulting checkpointing schedule is equivalent to the
original one generated by calling \textsc{revolve}() repeatedly (the
``controller'' mode). This is justified primarily by two observations.
\begin{itemize}
\item When \textsc{revolve}() returns \REVtakeshot{} (which means storing a
checkpoint), the next call to \textsc{revolve}() will return either \REVadvance{} or \REVyouturn{} or \REVfirsturn{}.
\item In the reverse sweep, every backward step is preceded by restoring a
checkpoint and recomputing from this point.
\end{itemize}
\begin{table}[ht]
  \centering
  \caption{Key \texttt{Revolve} parameters}
  \label{tab:revolve_parameter}
  \begin{tabular}{l l}
  \toprule
      \REVcapo{}  & the starting index of the time-step range to be reversed \\
      \REVfine{}  & the ending index of the time-step range to be reversed \\
      \REVcheck{} & the number of checkpoints in use \\
      \REVsnaps{} & the maximum number of checkpoints allowed\\
   \bottomrule
  \end{tabular}
\end{table}
\begin{algorithm}
  \caption{Proposed adjoint computation for a sequence of $m$ time steps using \texttt{Revolve}.}
\begin{algorithmic}[1]
  \State Initialize global variables $capo \coloneqq 0$, $fine \coloneqq m$, $check \coloneqq -1$, $snaps \coloneqq s$
  \State Initialize global variable $stepstogo \coloneqq 0$ 
  \State Initialize \texttt{Revolve} with $capo,fine,check$, and $snaps$ 
  \State $state \leftarrow$ \Call{forwardSweep}{$0,m,state$} 
  \State $adjstate \leftarrow$ \Call{adjointStep}{$adjstate$} 
  \For{$i \coloneqq m-1$ to $1$} 
    \State $whatodo \leftarrow $ \Call{revolve}{check,capo,fine,snaps}
    \State \Assert{whatodo=restore} \Comment{always start from restoring a checkpoint}
    \State $state,restoredind \leftarrow$ \Call{restore}{$check$} 
    \State $state \leftarrow $ \Call{forwardSweep}{$restoredind,i-restoredind,state$} 
    \State $adjstate \leftarrow$ \Call{adjointStep}{$adjstate$}
  \EndFor
  \Function{{forwardSweep}}{$ind,n,state$} \Comment{advance n steps from the \textit{ind}-th point}
  \For{$i \leftarrow ind$ to $ind+n-1$} 
    \State \Call{revolveForward} {$state$} \Comment{return \texttt{youturn}/\texttt{firsturn} at last iteration}
    \State $state \leftarrow$ \Call{forwardStep}{$state$} 
  \EndFor 
  \State \Return state 
  \EndFunction
  \Function{{revolveForward}}{$state$} \Comment{query \texttt{Revolve} and take actions}
    \If{$stepstogo=0$}
      \State $oldcapo \coloneqq capo$
      \State $whatodo \leftarrow $ \Call{revolve}{check,capo,fine,snaps}
      \If{$whatodo = takeshot$}  
        \State \Call{store}{$state,check$}  
          \State $oldcapo \coloneqq capo$
          \State $whatodo \leftarrow $ \Call{revolve}{check,capo,fine,snaps}
      \EndIf
      \If{$whatodo = restore$}  
        \State \Call{restore}{$state,check$}  
        \State $oldcapo \coloneqq capo$
        \State $whatodo \leftarrow $ \Call{revolve}{check,capo,fine,snaps}
      \EndIf
      \State \Assert{whatodo=advance||whatodo=youturn||whatodo=firsturn}
      \State $stepstogo \coloneqq capo-oldcapo$
    \Else
      \State $stepstogo \coloneqq stepstogo-1$
    \EndIf
  \EndFunction
\end{algorithmic}
\label{alg:revolve}
\end{algorithm}
Algorithm \ref{alg:mrevolve} depicts the adjoint computation using the modified
\texttt{Revolve}. Compared with Algorithm \ref{alg:revolve}, Algorithm
\ref{alg:mrevolve} shifts the positions of all the checkpoints so that the call
to \textsc{revolve}() is lagged.  We note that decreasing the counter
\texttt{stepstogo} by one in the backward time loop (Line 22) indicates that one
less recomputation is needed for each backward step. 
\begin{algorithm}
  \caption{Proposed adjoint computation using the modified \texttt{Revolve} algorithm.}
\begin{algorithmic}[1]
  \State Initialize global variables $capo \coloneqq 0$, $fine \coloneqq m$, $check \coloneqq -1$, $snaps \coloneqq s$
  \State Initialize global variable $stepstogo \coloneqq 0$
  \State Initialize \texttt{Revolve} with $capo,fine,check$, and $snaps$
  \State $state \leftarrow$ \Call{forwardSweep}{$0,m,state$}
  \State $adjstate \leftarrow$ \Call{adjointStep}{$adjstate$}
  \For{$i \coloneqq M-1$ to $1$}
    \State $restoredind \leftarrow$ \Call{revolveBackward}{$state$} \Comment{always restore a checkpoint}
    \State $state \leftarrow $ \Call{forwardSweep}{$restoredind,i-restoredind,state$}
    \State $adjstate \leftarrow$ \Call{adjointStep}{$adjstate$}
  \EndFor
  \Function{{forwardSweep}}{$ind,n,state$}
  \For{$i \coloneqq ind$ to $ind+n-1$}
    \State $state \leftarrow$ \Call{forwardStep}{$state$}
    \State \Call{revolveForward} {$state$}
  \EndFor
  \State \Return $state$
  \EndFunction
  \Function{{revolveBackward}}{$state$}
    \State $whatodo \leftarrow $ \Call{revolve}{check,capo,fine,snaps}
    \State \Assert{$whatodo = restore$}
    \State $state,restoredind \leftarrow$ \Call{restore}{$check$} 
    \State $stepstogo \coloneqq \max(capo-oldcapo-1,0)$ \Comment{need one less additional step since stage values are saved}
    \State \Return $restoredind$
  \EndFunction
\end{algorithmic}
\label{alg:mrevolve}
\end{algorithm}

We have implemented both Algorithms \ref{alg:revolve} and \ref{alg:mrevolve}
under the \texttt{TSTrajectory} class in \texttt{PETSc}, which provides two
critical API functions: \texttt{TSTrajectorySet()} and
\texttt{TSTrajectoryGet()}. The former function wraps \texttt{revolveForward}()
in \texttt{forwardSweep}(). The latter function wraps all the operations before
\texttt{adjointStep} in the \texttt{for} loop (Lines 7--10 in Algorithms
\ref{alg:revolve} and Lines 7--8 in Algorithms \ref{alg:mrevolve} ). This design
is suitable for preserving the established workflow of the ODE solvers so that
the influence on other interacting components such as \texttt{TSAdapt}
(time-step adaptivity class) and \texttt{TSMonitor} (time-step monitor class) is
minimized.

\texttt{PETSc} uses a redistributed
package\footnote{\url{https://bitbucket.org/caidao22/pkg-revolve.git}} that
contains a C wrapper of the original C++ implementation of \texttt{Revolve}.
Users can pass the parameters needed by \texttt{Revolve} through command line
options at runtime. \texttt{PETSc} provides additional command line options that
allow users to monitor the checkpointing process.

By design, \texttt{PETSc} manages the manipulation of checkpoints. The core data
structure is a stack  with push and pop operations, which is used to conduct the
actions decided by the checkpointing scheduler. Deep copy between the working
data and the checkpoints is achieved with the \texttt{PetscViewer} class. The
data can be encapsulated in either sequential or parallel distributed vectors.

Besides the offline checkpointing scheme, \texttt{PETSc} supports online
checkpointing and multistage checkpointing schemes, which are also provided by
the \texttt{Revolve} package. The proposed modification has been applied to
these schemes as well. Apart from memory, other storage media such as disk can
also be considered for storing checkpoints in binary format. For parallel file
systems, which are common on high-performance computing clusters, the
\texttt{PetscViewer} class can use MPI-IO to achieve high-performance parallel
I/O.

\subsection{Using \texttt{CAMS} in a discrete adjoint ODE solver}

Motivated by our experience with incorporating \texttt{Revolve} in
\texttt{PETSc} and the difficulties in handling the workflow, we design the main
interface function of \texttt{CAMS} to be \textbf{idempotent} and simplify its
output for better usability. The \texttt{CAMS} library is publicly available at
\url{https://github.com/caidao22/pkg-cams}. It provides both C and Python APIs.

To use \texttt{CAMS}, users first need to call the function
\begin{lstlisting}[style=CStyle]
offline_cams_create(int m,int s,int l,int stifflyaccurate);
\end{lstlisting}
to create a \texttt{CAMS} object and specify the number of time steps, the
number of allowable checkpointing units, the number of stages, and a flag that
indicates whether the integration method is stiffly accurate. At  creation time,
the dynamic programming algorithms presented in Sections
\ref{sec:stifflyaccurate} and \ref{sec:gmm} are executed, generating tables for
fast query access later on. A side benefit of dynamic programming is that when
$Prob_{\texttt{multistage}}(m,s)$ is solved, the solutions to the subproblems
$Prob_{\texttt{multistage}}(i,j)\ \forall 0<i<m,0<j<s $ become available.

Then users can query for the position and the type of next checkpoint by calling
\begin{lstlisting}[style=CStyle]
offline_cams(int lastcheckpointstep,int lastcheckpointtype,int s,int m,int l,
             int *nextcheckpointstep,int *nextcheckpointtype);
\end{lstlisting}
with negligible overhead.

The output of this function provides the minimal information necessary to guide
a checkpointing schedule. By design, it can be called repeatedly and return the
same output if provided the same input. This idempotence feature, however, is
not available in \texttt{Revolve} because the output of \texttt{Revolve} also
depends on hidden internal states. Redundant calls to the \texttt{Revolve}
function could be detrimental. In contrast, \texttt{CAMS} allows more \textbf{flexible}
and \textbf{robust} design of the workflow for the adjoint computation. Figure
\ref{fig:cams_workflow} illustrates how \texttt{CAMS} can be used in a typical
adjoint ODE solver. The core of the workflow is to determine whether a
checkpoint should be stored at the end of a time step, either during the forward
run or during the recomputation stage in the adjoint run. The other operations
such as restoring a checkpoint, recomputing from a restored state, or deleting an
unneeded checkpoint are intuitively straightforward. 
\begin{figure}[h]
  \centering
  \includegraphics[width=0.9\textwidth]{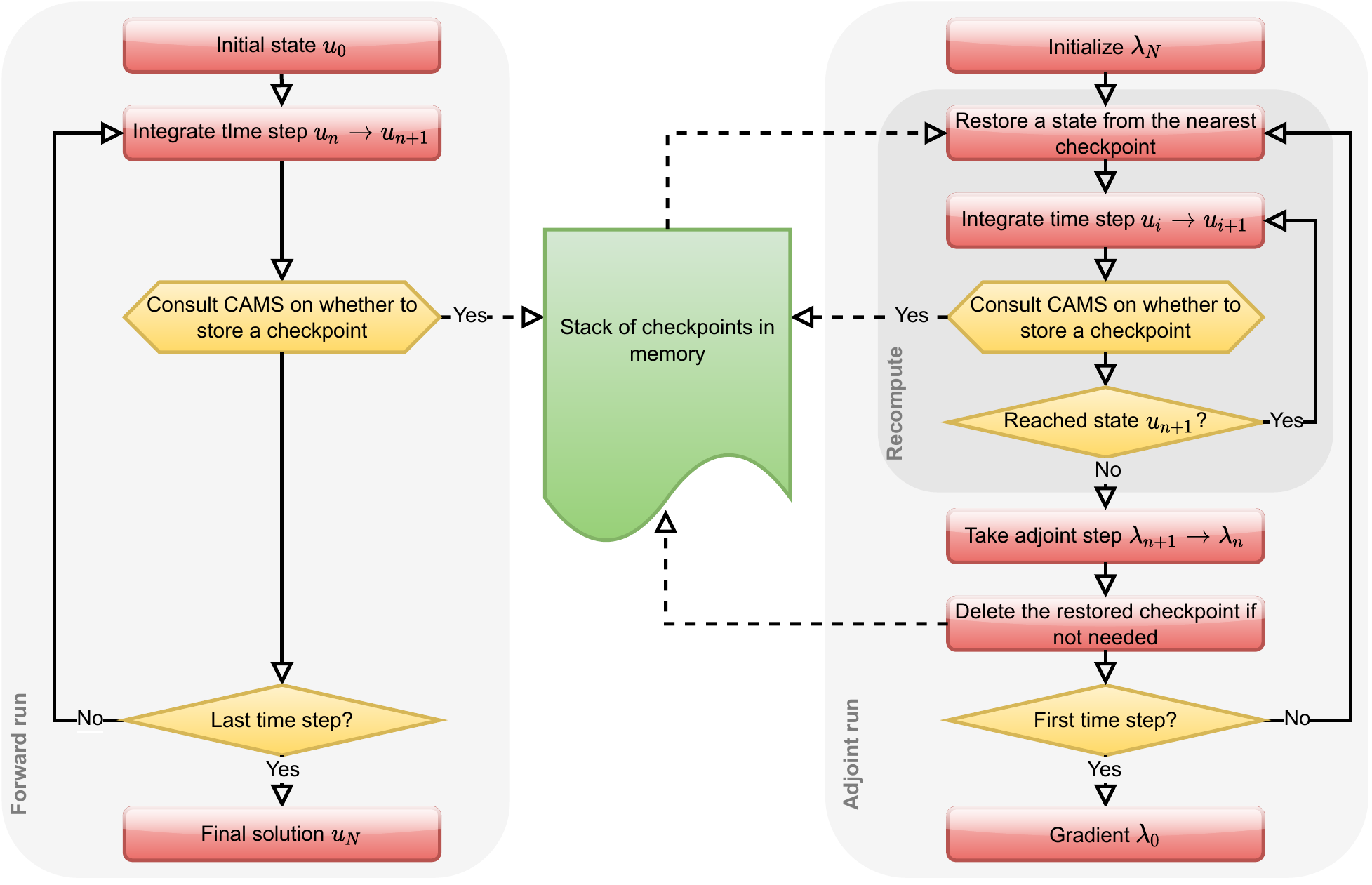}
  \caption{Workflow for the adjoint computation using \texttt{CAMS}.}
  \label{fig:cams_workflow}
\end{figure}

A detailed description of the algorithm for reversing a sequence of $m$ time
steps with \texttt{CAMS} is given in Algorithm \ref{alg:cams}. This algorithm
has also been implemented as an option in the \texttt{TSTrajectory} class in
\texttt{PETSc} with evident ease (compare Algorithm \ref{alg:cams} with
Algorithm \ref{alg:revolve} and \ref{alg:mrevolve}).
\begin{algorithm}[ht]
  \small
  \caption{Proposed adjoint computation for multistage time integration using \texttt{CAMS}.}
\begin{algorithmic}[1]
  \State $nextcptstep,nextcptype \leftarrow$ \Call{camsAPI} {$-1,-1,s,m,\ell$}
  \State Decide if the initial state should be stored according to $nextcptype$
  \State \Call{forwardSweep}{$0,m,state$}
  \State $adjstate \leftarrow$ \Call{adjointStep}{$adjstate$}
  \For{$i \coloneqq m-1$ to $1$}
    \State Restore the closest checkpoint and get $restoreind,restoredtype$
    \State $nextcpstep,nextcptype
    \leftarrow$ \Call{camsAPI} {$restoredind,restoredtype,s,m,\ell$}
    \State \Call{forwardSweep}{$restoredind,i-restoredind,state$}
    \State $adjstate \leftarrow$ \Call{adjointStep}{$adjstate$}
    \State Remove the checkpoint if no longer needed 
  \EndFor
  \Function{{forwardSweep}}{$ind,n,state$}
  \For{$i \coloneqq ind$ to $ind+n-1$}
    \State $state \leftarrow$ \Call{forwardStep}{$state$}
    \If{ $i = nextcpstep$}
        \State $nextcpstep,nextcptype \leftarrow$ \Call{camsAPI} {$nextcpstep,nextcptype,s,m,\ell$}
        \State Store the checkpoint and update $s$
    \EndIf
  \EndFor
  \EndFunction
\end{algorithmic}
\label{alg:cams}
\end{algorithm}

\section{Experimental Results}

To evaluate our algorithms for their performance and applicability to practical
applications at large scale, we ran experiments on the Cori supercomputer at
NERSC. In all the cases, we used $32$ Intel Knights Landing (KNL) nodes in cache
mode, $64$ cores per node, with each MPI process bound to one core. DDR memory
was used for storing checkpoints.

As a benchmark, we consider a PDE-constrained optimization problem for which an
adjoint method is used to calculate the gradient. The objective is to minimize
the discrepancy between the simulated result and observation data obtained from
a reference solution:
\begin{equation}
  \mathop{\text{minimize}}_{\bU_0} \| \bU(t_f) - \bU^{ob}(t_f)\|^2
\end{equation}
subject to the Gray--Scott equations \cite{hundsdorfer2007}
\begin{equation}
  \begin{aligned}
    \dot{\mathbf{u}} = D_1 \nabla^2 \mathbf{u} - \mathbf{u} \mathbf{v}^2 +
\gamma(1 -\mathbf{u}) \\
    \dot{\mathbf{v}} = D_2 \nabla^2 \mathbf{v} + \mathbf{u} \mathbf{v}^2 -
(\gamma + \kappa)\mathbf{v}, \\
  \end{aligned}
  \label{eq:diffusionreaction}
\end{equation}
where $\bU = [\mathbf{u} \; \mathbf{v}]^T$ is the PDE solution vector. A
reference solution is generated from the initial condition
\begin{equation}
  \mathbf{u}_0 = 1 - 2 \mathbf{v}_0, \quad
  \mathbf{v}_0  = 
  \begin{cases}
    \sin^2{(4\pi x}) \cos^2{(4\pi y)}/4,\; \forall x,y \in [1.0,1.5] \\
    0 \; \text{otherwise}
    \end{cases}
\end{equation}
and set as observed data. Solving the optimization problem implies recovering
the initial condition from the observations. This example is a simplified
inverse problem but fully represents the computational complexities and
sophistication in large-scale adjoint computations for time-dependent nonlinear
problems. 

Following the method of lines, the PDE is discretized in space with a centered
finite-difference scheme, generating a system of ODEs that is solved by using an
adjoint-capable time integrator in \texttt{PETSc} with a fixed step size. The
step size used is $1.0$ for implicit time integration and $0.001$ for explicit
time integration due to stability restrictions. For spatial discretization, the
computational domain $\Omega \in [0,2]^2$ is divided into a uniform grid of size
$128 \times 128$. The nonlinear system that arises at each time step is solved
with Newton's method. Both the linear and the transposed linear systems are
solved by using GMRES and a geometric algebraic multigrid preconditioner
following the same parameter settings described in \cite{zhang2021tsadjoint}. 

The base performance for adjoint computations with three selective time-stepping
schemes is presented in Table \ref{tab:diffusiont_reaction}. For implicit
time-stepping schemes, the reverse sweep takes much less time than the forward
sweep does, since only transposed linear systems need to be solved in the reverse
sweep whereas nonlinear systems need to be solved in the forward sweep. One
recomputation takes about $4$ times more than a backward step. Therefore,
savings in recomputations can lead to a dramatic reduction in total running time.
For explicit time-stepping schemes, to avoid the expensive cost of forming the
Jacobian matrix, we use a matrix-free technique to replace the transposed
Jacobian-vector product with analytically derived expressions. As a result, the
cost of one recomputation is less than the cost of a backward step.
\begin{table}
  \small
  \caption{Base performance for one adjoint computation (one forward sweep followed by one reverse sweep). In the tests with implicit schemes, the grid size is $2,048 \times 2,048$. For explicit schemes, the grid size is $16,384 \times 16,384$. Time is in seconds.}
  \label{tab:diffusiont_reaction}
  \centering
  \begin{tabular}{ c r r r}
    \toprule
    \multirowcell{2}{Time-stepping \\ schemes}  & \multirowcell{2}{Wall time for \\ forward sweep} &  \multirowcell{2}{Wall time for \\ reverse sweep} & \multirowcell{2}{Total time } \\
    & & & \\
    \midrule
    Backward Euler & 39.69 & 11.15  & 50.84  \\
    Crank--Nicolson  & 39.20  & 12.93 & 52.13  \\
    Runge--Kutta 4 & 1.56 & 2.10 & 3.66 \\
  \bottomrule
  \end{tabular}
\end{table}

Figure \ref{fig:mrevolve} shows the theoretical predictions regarding the
additional recomputations in the adjoint computation and the real recomputation
overhead in terms of CPU wall time. We see that the runtime decreases as the
allowable memory for checkpointing ($number\ of\ checkpointing\ units \times checkpoining\ unit\ size$) increases for the two time-stepping schemes
considered, backward Euler and Crank--Nicolson, both of which are stiffly
accurate. This decrease in runtime is expected because the number of additional
recomputations decreases as the number of allowable checkpointing units
increases. The best performance achieved by modified \texttt{Revolve} is
approximately $2.2$ times better than that by \texttt{Revolve}. \texttt{CAMS}
performs the best for all the cases. We  note that despite being noisy, the experimental results match the theoretical
predictions well. The timing variability is likely due to 
two aspects: one is that the cost of solving the implicit system varies across
the steps, which is expected for nonlinear systems; the other is
the run-to-run variations \cite{ChunduriHPMOCK17} on the KNL systems.
\begin{figure}[ht]
  \centering
  \subfloat[Theoretical predictions]{
    \includegraphics[width=0.48\textwidth]{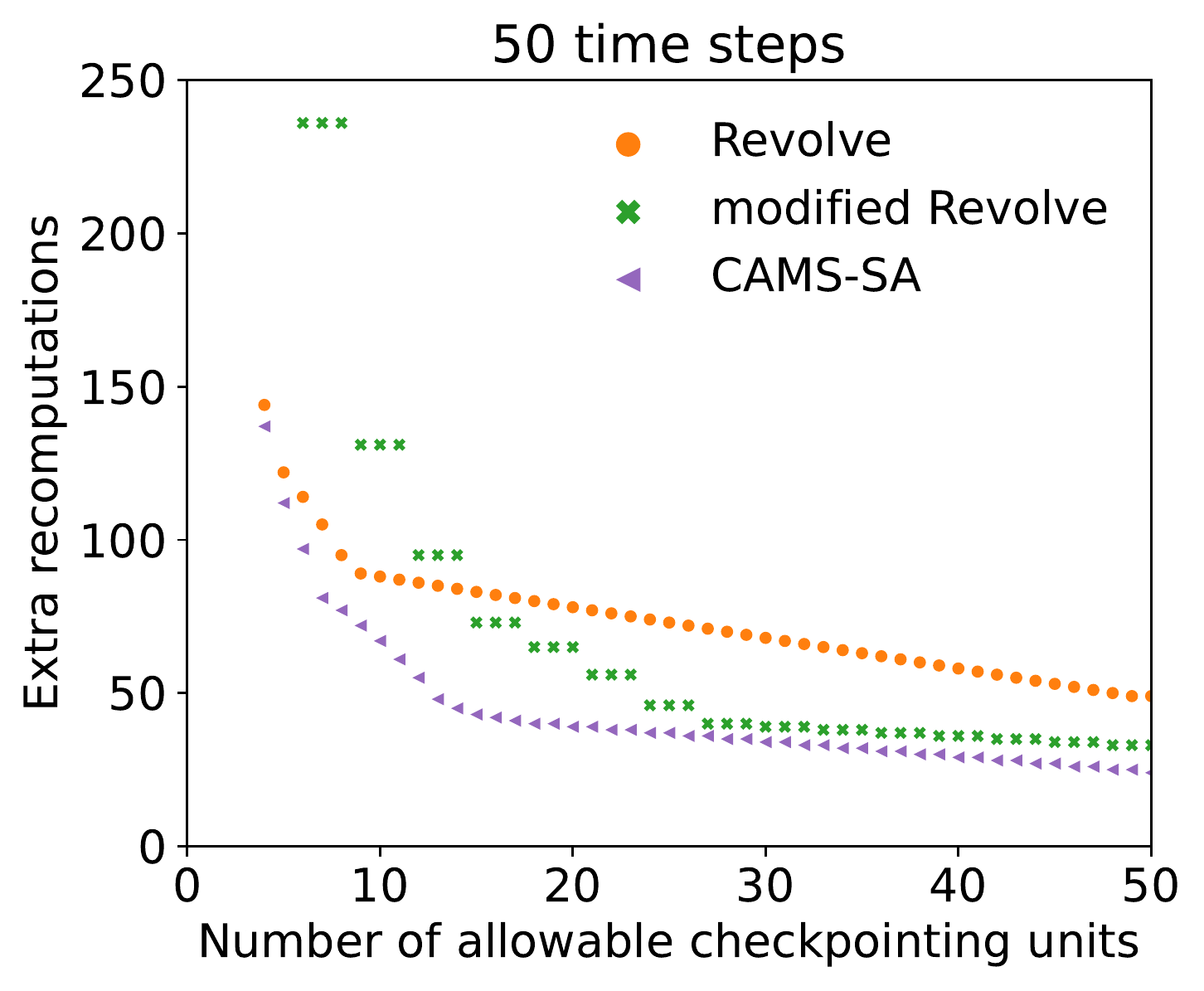}
  } \\
  \subfloat[Results with Crank--Nicolson]{
    \centering
    \includegraphics[width=0.48\textwidth]{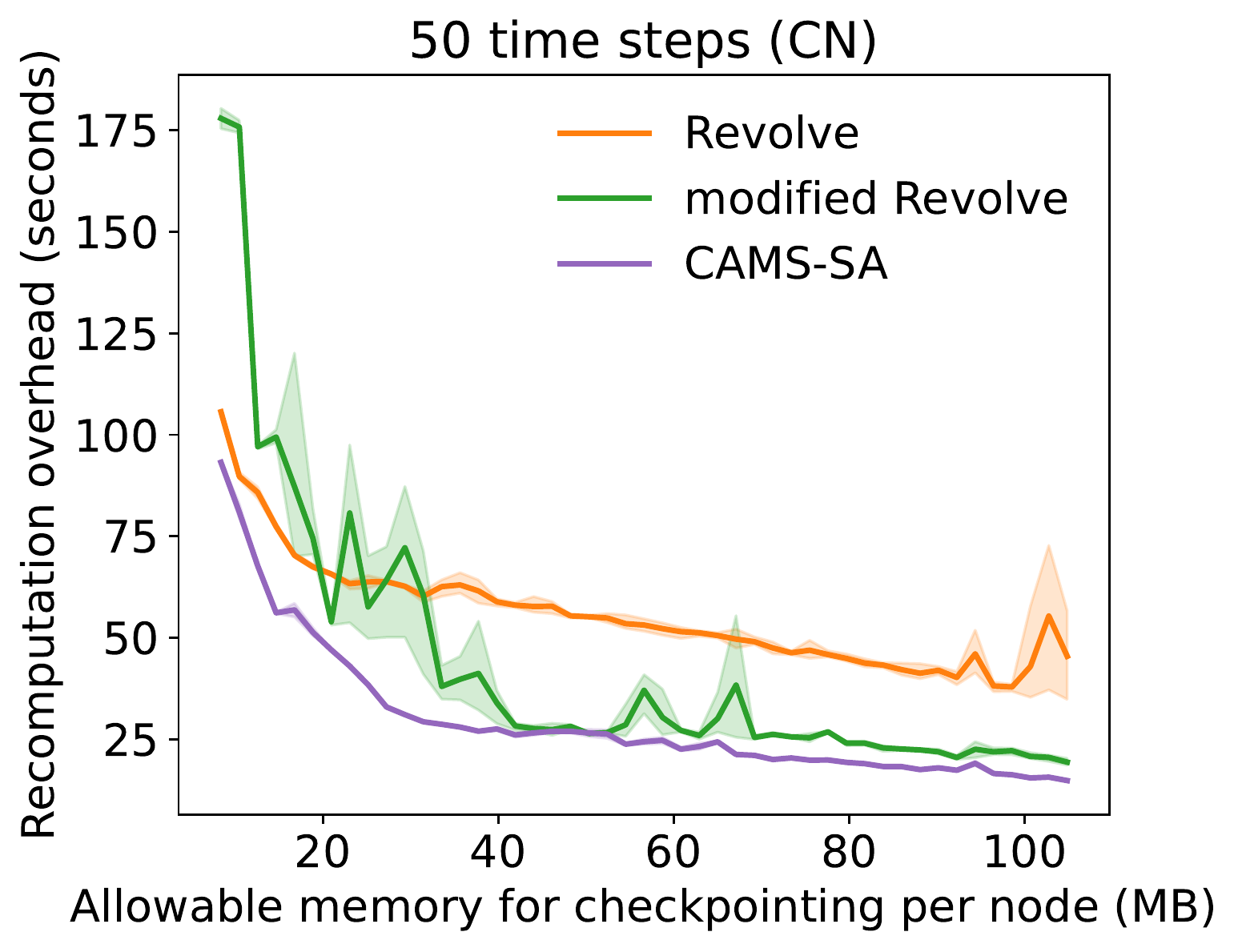}
  }
  \subfloat[Results with backward Euler]{
    \centering
    \includegraphics[width=0.48\textwidth]{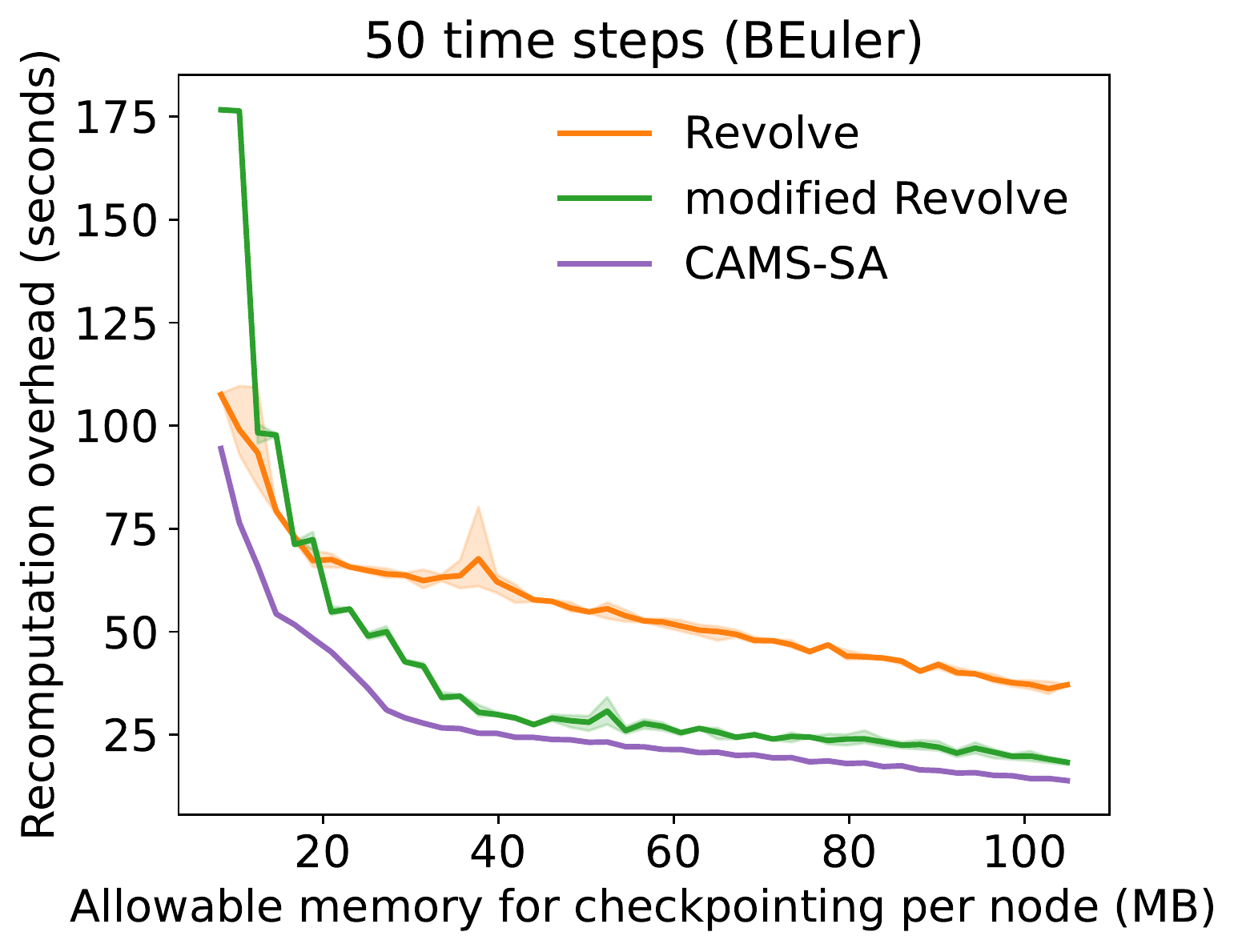}
  }
  \caption{Theoretical predictions and performance for \texttt{Revolve},
  modified \texttt{Revolve}, and \texttt{CAMS-SA}. In the experiments, the
  Crank--Nicolson method and backward Euler are used for time integration. Both
  methods can be viewed as a two-stage stiffly accurate method, where the second
  stage is the same as the solution. All the runs are executed on Cori with
  $2,048$ MPI processes ($32$ nodes) and are repeated three times to mitigate the influence of
  run-to-run variations.}
  \label{fig:mrevolve}
\end{figure}

Figure \ref{fig:mrevolve_rk} shows the result for the Runge--Kutta 4 method.
Modified \texttt{Revolve} outperforms \texttt{Revolve} when the allowable memory
for checkpointing per node is larger than roughly $6$ GB, which is close to the
theoretical prediction. We note that the performance of \texttt{CAMS-GEN}
slightly deviates from the theoretical prediction. This deviation is because the cost of
the memory movement accounts for a significant part of the recomputation
overhead, which can be partially explained by the fact that a KNL node has a low
DRAM bandwidth of $90GB/s$ but powerful cores with large caches and wide SIMD
units. However, \texttt{Revolve} is designed to minimize the number of times
saving a state to a checkpoint as a secondary objective \cite{Griewank2000} and
thus is less affected by the noticeable memory overhead than is
\texttt{CAMS-GEN}.
\begin{figure}[ht]
  \subfloat[Theoretical predictions]{
    \centering
    \includegraphics[width=0.46\textwidth]{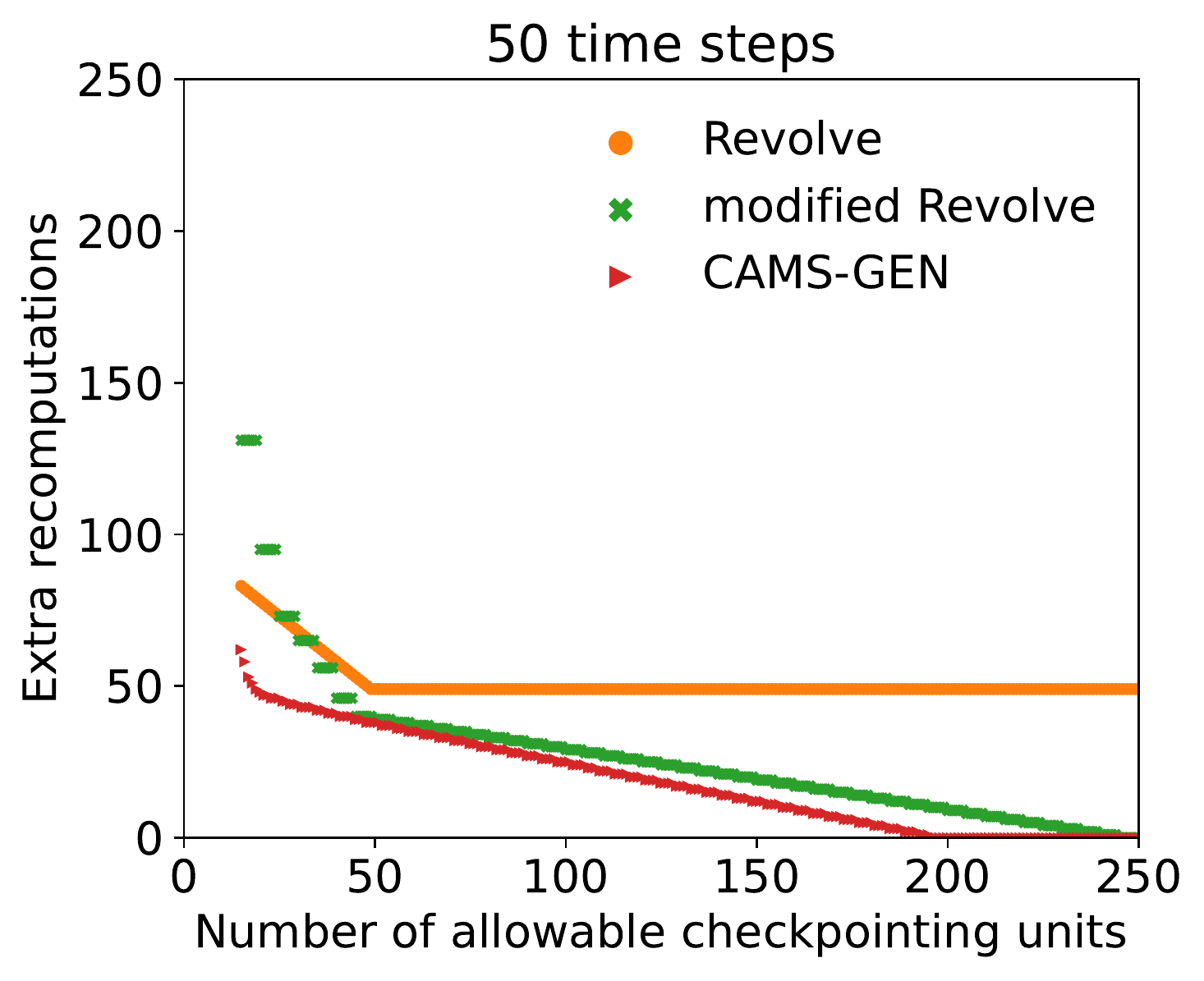}
  }
  \subfloat[Results with Runge--Kutta 4]{
    \centering
    \includegraphics[width=0.48\textwidth]{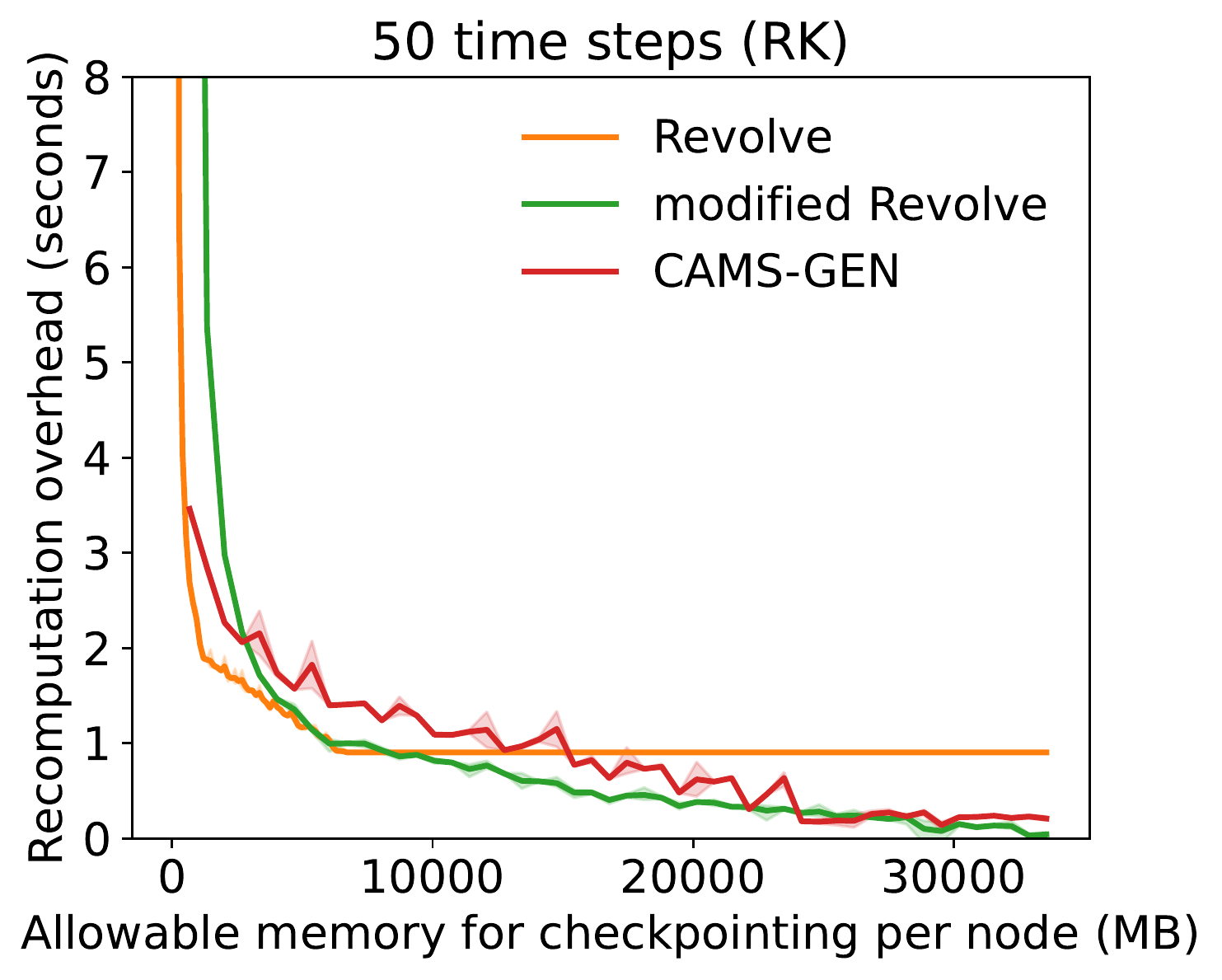}
  }
  \caption{Theoretical predictions and performance for \texttt{Revolve},
  modified \texttt{Revolve}, and \texttt{CAMS-GEN}. In the experiments, the
  Runge--Kutta 4 method is used for time integration. All the runs are executed
  on Cori with $2,048$ MPI processes and are repeated three times to mitigate
  the influence of run-to-run variations.}
  \label{fig:mrevolve_rk}
\end{figure}

\section{Conclusion}

The classical \texttt{Revolve} algorithm provides an optimal solution to the
checkpointing problem for adjoint computation in many scientific computations
when the temporal integration is abstracted at the level of time steps. 
When directly applied to multistage time-stepping schemes, however, it may not yield
optimal performance.

In this paper we have presented new algorithms that minimize the number of
recomputations under the assumption that the stage values of a multistage scheme
can be stored and the stage vectors are of the same size as the solution vector.
By extending from \texttt{Revolve} and redefining the checkpoint content, we
derived the modified \texttt{Revolve} algorithm that provides better performance
for a small number of time steps. Based on dynamic programming, we proposed the
\texttt{CAMS} algorithm, which proves to be optimal for computing the discrete
adjoint of multistage time-stepping schemes. We also developed a variant of
\texttt{CAMS} that takes advantage of the special property of stiffly accurate
time-stepping schemes. The performance has been studied both theoretically and
numerically. The results on representative cases show that our algorithms can
deliver up to $2$ times speedup compared with \texttt{Revolve} and do not need
recomputation when memory is sufficient.

In addition, the usage of \texttt{CAMS} is tailored to the workflow of practical
ODE solvers. We also propose a new approach for integrating the algorithms
introduced in this work into existing scientific libraries. In our approach the
solver does not require using our algorithms as a centralized controller over
the entire workflow, as proposed in the design of classical \texttt{Revolve};
thus our approach is less intrusive to the application codes. The proposed
algorithms have been successfully incorporated into the \texttt{PETSc} library.
For future work we will extend \texttt{CAMS} for online checkpointing problems
and investigate new algorithms that take memory access overhead into account.

\section*{Acknowledgments} We thank Jed Brown and Mark Adams for providing
 allocations on the Cori supercomputer.

\bibliography{paper.bib}
\bibliographystyle{ACM-Reference-Format}

 \begin{center}
	\scriptsize \framebox{\parbox{4in}{Government License (will be removed at publication):
			The submitted manuscript has been created by UChicago Argonne, LLC,
			Operator of Argonne National Laboratory (``Argonne").  Argonne, a
			U.S. Department of Energy Office of Science laboratory, is operated
			under Contract No. DE-AC02-06CH11357.  The U.S. Government retains for
			itself, and others acting on its behalf, a paid-up nonexclusive,
			irrevocable worldwide license in said article to reproduce, prepare
			derivative works, distribute copies to the public, and perform
			publicly and display publicly, by or on behalf of the Government. The Department of Energy will provide public access to these results of federally sponsored research in accordance with the DOE Public Access Plan. http://energy.gov/downloads/doe-public-access-plan.
}}
	\normalsize
\end{center}

\end{document}